\documentclass[accepted, breaklinks,colorlinks=true,linkcolor=BrickRed, urlcolor=blue, anchorcolor=Blue,citecolor=Blue,backref=page]{uai2023} %
\usepackage[american]{babel}
\usepackage[authoryear,round]{natbib} %
\bibliographystyle{abbrvnat}

\usepackage{mathtools} %
\usepackage{amssymb}
\usepackage{alphalph}
\usepackage{multirow}
\usepackage[table,dvipsnames]{xcolor}
\usepackage{amsthm}
\usepackage{booktabs}
\usepackage{bm}
\usepackage{bbm}
\usepackage{booktabs} %
\usepackage{tikz} %
\usepackage{notation}
\usepackage{hyperref}
\usepackage{subcaption}
\usepackage{cleveref}
\crefname{figure}{Fig.}{Figs.}
\crefname{definition}{Def.}{Defs.}
\crefname{corollary}{Cor.}{Cors.}
\crefname{proposition}{Prop.}{Props.}
\crefname{theorem}{Thm.}{Thms.}
\crefname{remark}{Remark}{Remarks}
\crefname{principle}{Principle}{Principles}
\crefname{lemma}{Lemma}{Lemmata}
\crefname{claim}{Claim}{Claims}
\crefname{table}{Tab.}{Tabs.}
\crefname{section}{\S}{\S\S}
\crefname{subsection}{\S}{\S\S}
\crefname{subsubsection}{\S}{\S\S}
\crefname{assumption}{Assumption}{Assumptions}
\crefname{appendix}{App.}{App.}
\crefname{equation}{}{}
\crefname{example}{Example}{Examples}
\newcommand{\ulim}[1]{\underset{#1 \rightarrow \infty}{\text{lim}}} %
\newcommand{\cons}{\xrightarrow{\text{P}}} %
\newcommand{\asconv}{\xrightarrow{\text{a.s.}}} %
\newcommand{\sampleiid}{\stackrel{\text{i.i.d.}}{\sim}} %

\newtheorem{theorem}{Theorem}[section]
\newtheorem{definition}{Definition}[section]
\newtheorem{proposition}[theorem]{Proposition}

\newtheorem{remark}[theorem]{Remark}
\renewcommand*{\backref}[1]{}
\renewcommand*{\backrefalt}[4]{%
    \ifcase #1%
          \or [Cited on page~#2.]%
          \else [Cited on pages~#2.]%
    \fi%
    }
\title{Causal Effect Estimation from Observational and Interventional Data \\ Through Matrix Weighted Linear Estimators}
\author[1,2]{Klaus-Rudolf Kladny%
}
\author[2,3]{Julius von K\"ugelgen}
\author[1,2]{Bernhard Sch\"olkopf}
\author[2]{Michael Muehlebach}
\affil[1]{%
    Department of Computer Science\\
    ETH Z\"urich\\
    Switzerland
}
\affil[2]{%
    Max Planck Institute for Intelligent Systems,
    T\"ubingen, Germany
}
\affil[3]{%
    Department of Engineering, University of Cambridge, United Kingdom
    
    \texttt{\{kkladny, jvk, bs, michaelm\}@tue.mpg.de}
    }
\begin{document}
\maketitle

\begin{abstract}
We study causal effect estimation from a mixture of observational and interventional data in a confounded linear regression model with multivariate treatments. 
We show that the statistical efficiency in terms of expected squared error can be improved by combining estimators arising from both the observational and interventional setting. 
To this end, we derive methods based on matrix weighted linear estimators and prove that our methods are asymptotically unbiased in the infinite sample limit. This is an important improvement compared to the pooled estimator using the union of interventional and observational data, for which the bias only vanishes if the ratio of observational to interventional data tends to zero. 
Studies on synthetic data confirm our theoretical findings. In settings where confounding is substantial and the ratio of observational to interventional data is large, our estimators outperform a Stein-type estimator and various other baselines.
\end{abstract}
\section{Introduction}\label{sec:intro}
Estimating the causal effect of a treatment variable on an outcome of interest is a fundamental scientific problem that is central to disciplines such as econometrics, epidemiology, and social science~\citep{angrist2009mostly,morgan2014counterfactuals,imbens2015causal,hernan2020causal}.
A fundamental obstacle to this task is the possibility of hidden confounding: unobserved variables that influence both the treatment and the outcome may introduce additional 
associations between them~\citep{reichenbach1956direction}.
As a result, estimators purely based on observational (passively collected) data can be biased and typically do not recover the true causal effect.

This contrasts experimental studies such as randomized controlled trials~\citep[RCTs;][]{neyman1923application,fisher1936design}, where the treatment assignment mechanism is modified through an external intervention, thus breaking potential influences of confounders on the treatment.
For this reason, RCTs have become the gold standard for causal effect estimation.
However, obtaining such interventional data is difficult in practice because the necessary experiments are often infeasible, unethical, or very costly to perform.

\begin{figure}[tb] 
\centering
\begin{subfigure}{0.5\columnwidth}
\centering
\includegraphics[width=10em]{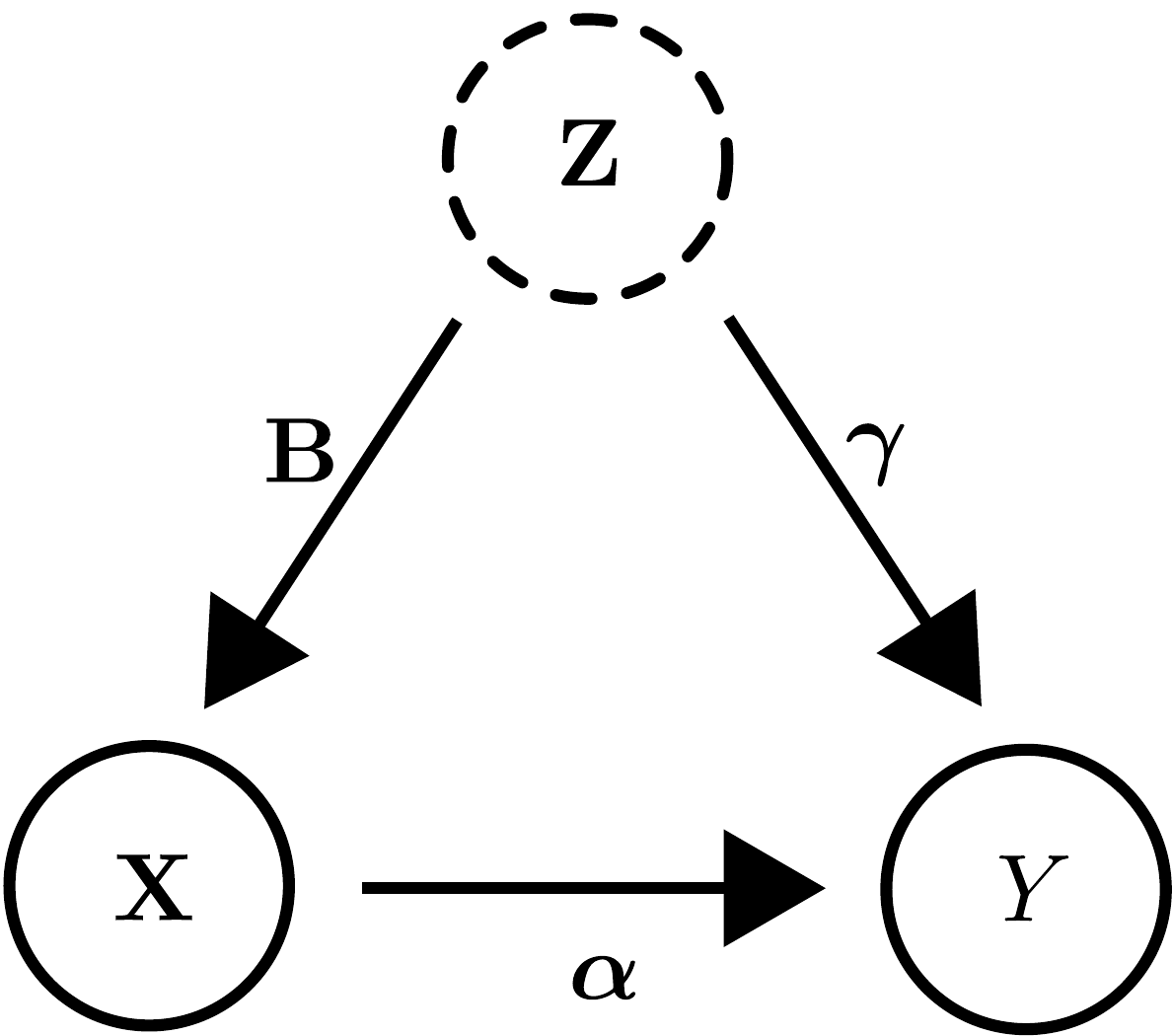}
\caption{observational}
\label{fig:setting_obs}
\end{subfigure}%
\begin{subfigure}{0.5\columnwidth}
\centering
\includegraphics[width=10em]{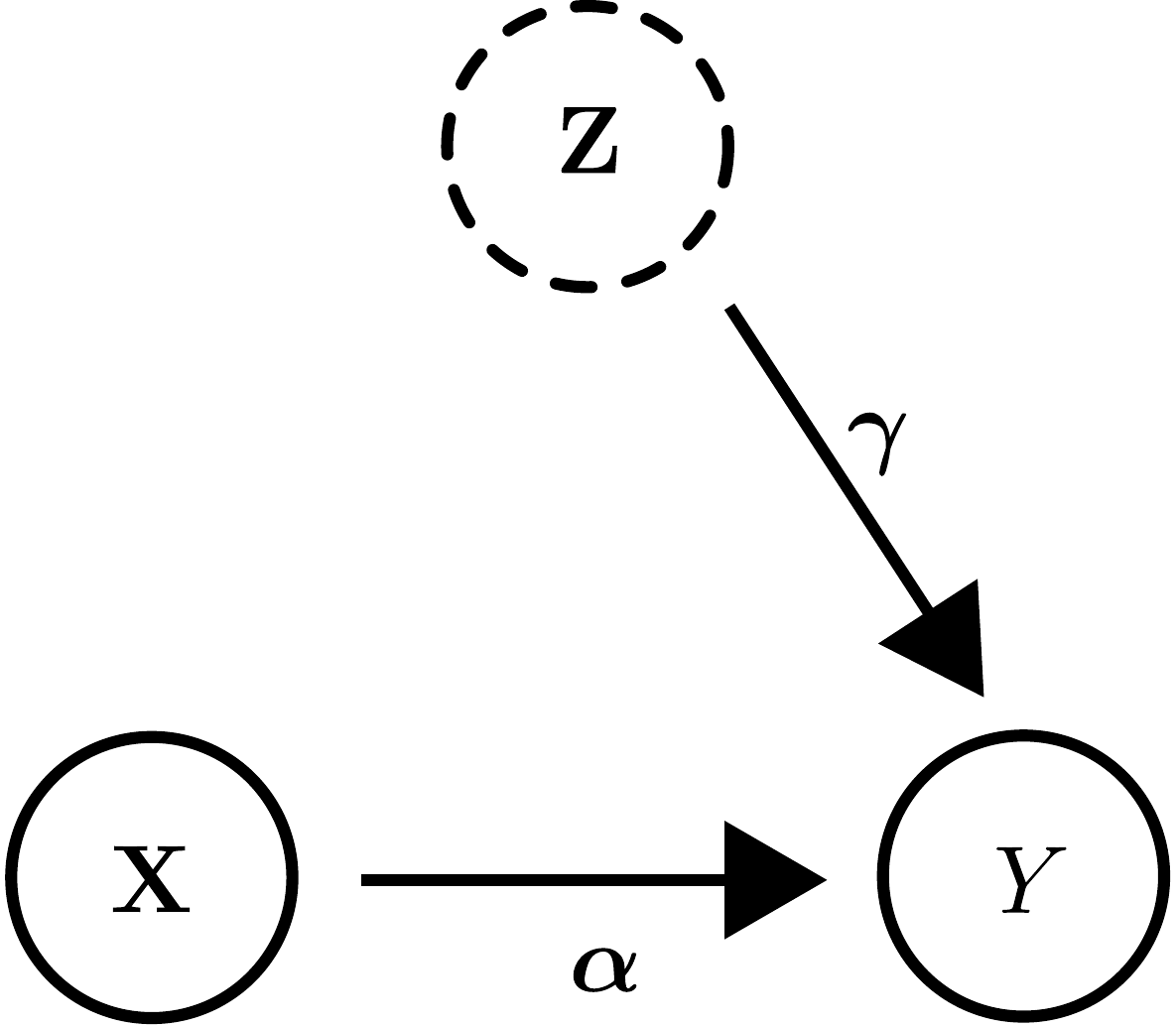}
\caption{interventional}
\label{fig:setting_int}
\end{subfigure}
\caption{\textbf{Overview.} We estimate the causal effect~$\alphab$ of a continuous multi-variate treatment~$\Xb$ on a scalar outcome~$Y$ in a linear Gaussian structural equation model with unobserved confounders~$\Zb$. 
Given a large sample from the observational setting in (a) and a small sample from the interventional setting in (b), we establish an optimal, sample-size dependent matrix weighting scheme for combining the biased, low-variance observational estimator with the unbiased, high-variance interventional estimator.}
\label{fig:scm}
\end{figure}

In contrast, observational data is usually cheap and abundant, motivating the study of causal inference from observational data~\citep{rubin1974estimating,pearl2009causality}.
In fact, in certain situations causal effects can be identified and estimated from purely observational data, even under hidden confounding, e.g., in the presence of natural experiments~\citep[instrumental variables;][]{angrist1996identification} or observed mediators~\citep[front-door adjustment;][]{pearl1995causal}.
However, this does not apply to the general case in which a treatment~$\Xb$ and an outcome~$Y$ are confounded by an unobserved variable~$\Zb$ as shown in~\cref{fig:setting_obs}.

\looseness-1 In the present work, we study treatment effect estimation in this general setting under the assumption that we have access to both observational and interventional data. The latter can be viewed as sampled from the setting shown in~\cref{fig:setting_int}, where the arrow $\Zb\to\Xb$ has been removed as a result of the intervention on $\Xb$~\citep[graph surgery;][]{spirtes2000pc}, and is thus unbiased for our task. 
Due to small sample size, however, the estimator based only on interventional data may exhibit high variance. 
Our main idea is therefore to use the (potentially large amounts of) observational data for variance reduction---at the cost of introducing some bias. This is achieved by forming a combined estimator, which is superior to the purely interventional one in terms of 
mean squared error. 

We make the key assumption that both the treatment $\Xb \to Y$ and confounding $\Zb\to\{\Xb,Y\}$ effects are linear, but allow for treatment~$\Xb$ and unobserved confounder~$\Zb$ to be continuous and multi-variate. 
We then consider a class of estimators of the causal effect parameter vector that combine the unbiased, but high-variance interventional estimator and the biased, but low-variance observational estimator through weight matrices---akin to a multi-variate convex combination.
We study the statistical properties of these estimators, establish theoretical optimality results, and investigate their empirical behavior through simulations.

In summary, we highlight the following contributions:
\begin{itemize}[topsep=0em,itemsep=0em,leftmargin=*]
    \item We introduce a new framework of weighing linear estimators using matrices and show that several existing approaches fall into this category~(\cref{sec:weighting_matrix}).
    \item \looseness-1 We prove that, unlike pooling observational and interventional data~(\cref{prop:MSE_greater_zero}), our matrix weighting approaches 
    achieve vanishing mean squared error in the interventional sample limit~(\cref{prop:weight_matrix_cons,theo:no_bias_in_the_limit}) if the ratio between observational and interventional data is non-vanishing.
    \item \looseness-1 We discuss two practical approaches for variance reduction in estimating optimal weight matrices (\cref{sec:inductive_biases}; \cref{prop:weak_weight_consistency}), and demonstrate through simulations that our estimators outperform 
    baselines and existing methods
    in situations where confounding is substantial~(\cref{sec:experiments}).
\end{itemize}
\section{Related Work}\label{sec:related_work}
Causal reasoning, i.e., inferring a causal query such as a causal effect, can be split up into the tasks of (i) identification and (ii) estimation.
Step (i) operates at the population level and seeks to answer whether a causal question can---at least in principle---be answered given infinite data.
If the answer is positive and a valid estimand is provided, step (ii) then aims to construct a statistically efficient estimator.

A causal query is identified from a set of assumptions if it can be expressed in terms of the available distributions (e.g., a mixture of different observational and interventional distributions). 
To this end, \citeauthor{pearl2009causality}'s do-calculus~(\citeyear{pearl1995causal,pearl2009causality}) provides an axiomatic set of rules for manipulating causal expressions based on graphical criteria.
The identification task has been studied extensively~\citep{tian2002general,pearl2014external,bareinboim2016causal} and has by now been solved for many settings of interest: In these cases, the do-calculus---and its extensions~\citep{correa2020general}---are sound and complete in that they provide a valid estimand if and only if one exists~\citep{huang2006identifiability,shpitser2006identification,bareinboim2012causal,sanghack2020gID}.

\looseness-1 In our setting from~\cref{fig:scm}, the causal effect~$\alphab$ is not identifiable from observational data, but is trivially identified by intervening on $\Xb$.
Yet, this leaves open the question of \textit{how to estimate $\alphab$ from finite data in the best possible way}.
In contrast to the plethora of works on identification, there is much less
prior literature about statistical efficiency of causal parameter estimation, particularly for confounded settings. 

\begin{figure}[tb]
    \centering
    \includegraphics[width=\columnwidth]{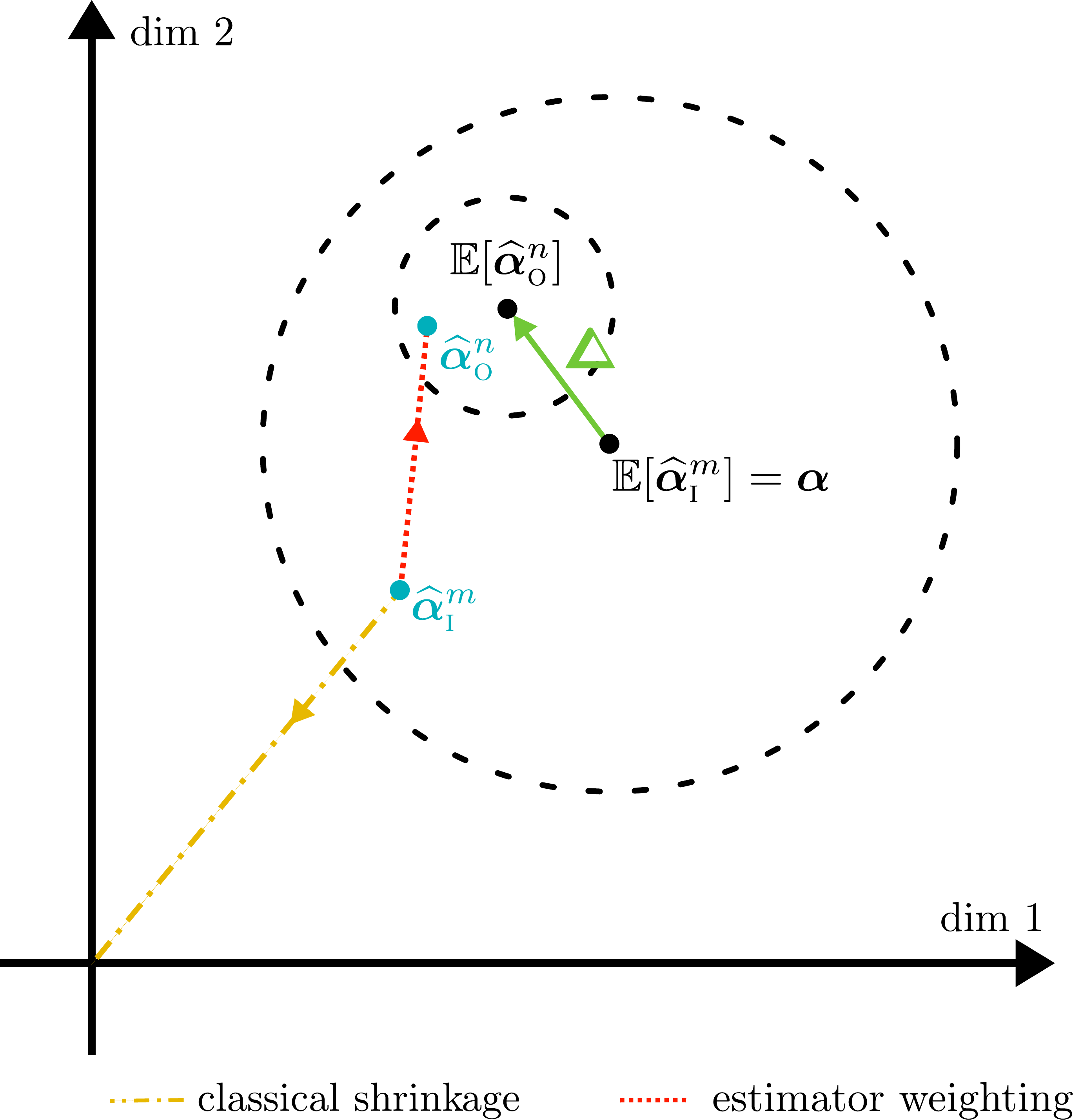}
    \caption{\textbf{Relation between shrinkage 
    and estimator weighting in 2D.} Classical shrinkage methods like ridge regression  (yellow) shrink the interventional estimator $\alphaI$ toward the origin. Scalar estimator weighting (red) instead shrinks toward the observational estimator~$\alphaO$. 
    Dashed circles show the covariances of $\alphaO$ and $\alphaI$, here assumed isotropic. $\bm\Delta$ (green) is the confounding-induced bias of $\alphaO$.
    }
    \label{fig:relation_to_shrink}
\end{figure}

A common source of inspiration for both prior work and our approach is that of shrinkage estimation. 
In light of the bias-variance decomposition of the mean squared error~\citep[e.g.,][p.~24]{hastie2009elements}, shrinkage can yield a strictly better (``dominating'') estimator by reducing variance, at the cost of introducing some bias.
These ideas were first introduced in frequentist statistics by~\citet{stein1956inadmissibility,james1961estimation} who showed that the maximum likelihood estimate of a multivariate mean is dominated by shrinking towards a fixed point such as the origin.
Similar ideas are also at the heart of empirical Bayes analysis~\citep{robbins1964empirical,efron1973stein,efron2012large}.
For estimating a parameter vector~$\alphab$ in a linear model, as is the focus of the present work, a classical shrinkage method is ridge regression~\citep{hoerl1970ridge}.
Instead of shrinking towards the origin, an intuitive idea for causal effect estimation
is to \textit{shrink towards the observational estimator}. The hope is that the latter constitutes a better attractor
if the confounding bias is not too large---despite  a slight increase in variance compared to shrinking toward a constant.
We refer to this approach as scalar estimator weighting. \Cref{fig:relation_to_shrink} shows a visual comparison to classical shrinkage estimation.
The most closely related work on estimator weighting is that of
\citet{strawderman1991stein, green2005improved} and \citet{rosenman2020combining}. The former two consider general biased and unbiased estimators. The latter propose weighting schemes for estimating vectors of multiple \textit{binary} treatment effects. These works are strongly inspired by James-Stein shrinkage estimators
and minimize a generalized version of Stein's unbiased risk estimate~\citep[p.~150]{wasserman2006allofnonparam}. \Citet{rosenman2020combining}
show optimality among scalar weights with respect to minimizing the true risk as the %
dimensionality of the estimated treatment effects goes to infinity. However, these theoretical results rely on knowledge of the true covariance matrix of the interventional estimator (which is typically unknown in practice), and 
the behavior of their estimators %
in the infinite sample limit is not analyzed.

Other work that focuses on combining observational and interventional data to estimate  causal effects of \textit{binary} treatments includes, e.g., \citet{kallus2018removing,cheng2021adaptive,ilse2021combining,rosenman2022propensity,hatt2022combining}, see~\citet{colnet2020causal} for a comprehensive survey.

\Citet{pmid33088006} also study combining estimators of binary treatment effects. However, in their framework an estimator with less bias in addition to a second error-prone estimator is computed from a second observational ``validation set'', in which all confounders are measured. Our framework, in contrast, does not require measurements of the confounders.
In the present work, we consider a
general
linear regression setting with continuous (rather than binary) multi-variate treatments. 
To combine observational and interventional data, we introduce a new class of matrix (rather than scalar) weighted estimators, of which ridge regression and data pooling are special cases.
Instead of employing Stein's unbiased risk estimate, we develop and analyze estimates for the theoretically optimal weight matrix, without making assumptions about the covariance structure of estimators.

Most approaches to causal estimation, including the present work, assume that the causal structure among variables is known and takes the general form of the directed acyclic graph in~\cref{fig:scm}.
For prior work on leveraging observational and interventional data for causal discovery, or structure learning, see, e.g., \citet{wang2017permutation}.

\section{Setting \& Preliminaries}\label{sec:setup}
\paragraph{Notation.} Upper case $Y$ denotes a scalar random variable, lower-case $y$ a scalar, bold lower-case $\xb$ a vector, and bold upper-case $\Xb$ either a matrix or random vector. 
The spectral norm of a matrix $\Xb$ is denoted by $\norm{\Xb}_2$.

\paragraph{Causal Model.} 
To formalize our problem setting, 
we adopt the structural causal model framework of~\citet{pearl2009causality}. Specifically, we assume that the causal relationships between the $d$-dimensional confounder~$\Zb$, the $p$-dimensional treatment~$\Xb$, and the scalar outcome~$Y$ are captured by the following linear Gaussian structural equation model~(SEM):
\begin{align} \label{equ:multivarSCMconfounder}
    \mathbf{Z}  &\; \leftarrow \; \mathbf{N}_{\mathbf{Z}}, 
    &&\Nb_\Zb\sim\Ncal(\bm{\mu}_{\Nb_\Zb}, \Sigmab_{\Nb_\Zb})\\
    \Xb  &\; \leftarrow \;  \mathbf{B} \mathbf{Z}+ \mathbf{N}_{\Xb}, \label{equ:SA_X} 
    &&\Nb_{\Xb}\sim\Ncal(\bm{\mu}_{\Nb_\Xb}, \Sigmab_{\Nb_\Xb})\\ 
    Y  &\; \leftarrow \; \mathbf{Z}^{\top} \bm\gamma+ \Xb^{\top} \bm\alpha+N_Y,
    &&N_Y\sim\Ncal(\mu_{N_Y}, \sigma^2_{N_Y}) \label{eq:outcome}
\end{align}
with $\Bb\in\RR^{p\times d}$, $\gammab\in\RR^d$, $\alphab\in\RR^p$, and $(\Nb_\Zb, \Nb_\Xb, N_Y)$ mutually independent exogenous noise variables. 
The SEM in~\cref{equ:multivarSCMconfounder}--\eqref{eq:outcome} induces an observational distribution over $(\Zb,\Xb, Y)$ which is referred to as $\Pobs$, see~\cref{fig:setting_obs}.

To model the interventional setting, we consider a soft intervention~\citep{eberhardt2007interventions},
which randomizes the treatment~$\Xb$ by replacing the assignment in~\cref{equ:SA_X} with
\begin{equation} \label{equ:replaced_SA}
    \Xb \; \leftarrow \; \widetilde{\mathbf{N}}_{\Xb},\qquad \qquad \widetilde{\mathbf{N}}_{\Xb}\sim\PP_{\widetilde{\mathbf{N}}_{\Xb}}, \qquad 
\end{equation}
where $\widetilde{\mathbf{N}}_{\Xb}$ is mutually independent of $\Nb_\Zb$ and $N_Y$. We note that $\widetilde{\mathbf{N}}_{\Xb}$ may be non-Gaussian. The modified interventional SEM consisting of~\cref{equ:multivarSCMconfounder,equ:replaced_SA,eq:outcome} induces a different, interventional distribution over $(\Zb,\Xb,Y)$, which we refer to as $\Pint$, see~\cref{fig:setting_int}. 

For ease of notation and for the remainder of this work, we assume without loss of generality that all noise variables are zero-mean. Details on how to extend our method to non zero-mean noise variables are provided in App.~\ref{sec:mean_non_zero}.

\paragraph{Data.} We assume access to two separate datasets of observations of $(\Xb,Y)$ of size $n$ and $m$, each sampled independently from the observational and interventional distributions (i.i.d.), respectively:
\begin{align*}
    (\xb_i, y_i) \; &\sampleiid \; 
    \Pobs, \quad i=1,...,n,\\
    (\xb_i, y_i) \; &\sampleiid \; 
    \Pint,\quad i=n+1,...,n+m,
\end{align*}
where $\Pobs$ and $\Pint$ denote the distributions of $(\Xb, Y)$ in the observational and interventional settings, respectively.
We note that the confounder $\Zb$ remains unobserved.
We concatenate the observational sample in a treatment matrix $\XO=(\xb_1, ..., \xb_n)^\top\in\RR^{n\times p}$ and outcome vector $\YO=(y_1,...,y_n)^\top\in\RR^n$, and similarly with $\XI,\YI$ for the interventional sample.
Finally, we denote the pooled data by  $\XP=(\XO, \XI)\in\RR^{(n+m)\times p}$ and $\YP=(\YO,\YI)\in\RR^{n+m}$. 

\paragraph{Goal.} Our objective is to obtain an accurate estimate of the parameter vector $\bm{\alpha}$, which characterizes the linear causal effect of $\Xb$ on $Y$ in~\cref{eq:outcome}. Formally, it is given by
\begin{align*}
    \bm{\alpha} = \nabla_{\xb} \mathbb{E} [Y|\mathrm{do}(\Xb \leftarrow \xb)],
\end{align*}
where the $\mathrm{do}(\cdot)$ operator denotes a manipulation of the treatment assignment akin to~\cref{equ:replaced_SA}, and the expectation is taken with respect to the corresponding conditional distribution.

\paragraph{Confounding Issues.}In the general case with non-zero $\Bb$ and $\gammab$, the observational setting is confounded, meaning %
\begin{equation*}
    \Pobs(Y | \Xb = \xb) \neq 
    \PP(Y | \mathrm{do}(\Xb \leftarrow \xb))=\Pint(Y | \Xb = \xb),
\end{equation*}
which complicates the use of observational data.
Specifically, for our assumed model~\cref{equ:multivarSCMconfounder}--\eqref{eq:outcome} the conditional expectation of $Y$ under $\Pobs$ is given by
the following perturbed linear model~\citep{cevid2020spectral}:
\begin{equation} \label{eq:pert_model}
    \mathbb{E}_\text{obs}
    [Y | \Xb = \xb]=(\bm{\alpha} + \bm{\Delta})^{\top} \xb,
\end{equation}
where $\Deltab\in\RR^p$ denotes the \textit{confounding bias}, which 
is given explicitly in terms of the model parameters as
\begin{equation}\label{equ:Delta}
    \bm{\Delta} = (\Sigmab_{\mathbf{N}_{\Xb}} + \mathbf{B} \Sigmab_{\Nb_\mathbf{Z}} \mathbf{B}^{\top})^{-1} \mathbf{B} \Sigmab_{\Nb_\mathbf{Z}} \bm{\gamma}.
\end{equation}
It can be seen from~\cref{equ:Delta} that the confounding bias~$\Deltab$ is zero if $\Bb$
 or $\gammab$ are zero (i.e., $\Zb$ only affects either $\Xb$ or $Y$).
 Furthermore, we have that, in general,
\begin{equation}\label{eq:cond_vars}
      \text{Var}_{\text{obs}}(Y | \Xb)=\condvar \neq \intvar=\text{Var}_\text{int}(Y|\Xb)\,.
\end{equation}

\paragraph{Assessing Estimator Quality.}
We rely on
mean squared error with respect to the true parameter $\bm{\alpha}$ as a measure for comparing different estimators.
\begin{definition}[MSE]
Let $\widehat{\bm{\alpha}}$ be any function of the pooled data $(\XP,\YP)$ taking values in $\mathbb{R}^p$. Then
\begin{equation*}
    \MSE(\widehat{\bm{\alpha}}) \coloneqq \mathbb{E}\left[\norm{\widehat{\bm{\alpha}} - \bm{\alpha}}_2^2\right],
\end{equation*}
where the expectation is taken over $\XP, \YP$.
\end{definition}
We note that the mean squared error can also be written as follows:
\begin{equation}
\label{eq:bias_variance}
    \MSE(\widehat{\bm{\alpha}}) = \norm{\mathbf{Bias}(\widehat\alphab)}_2^2
    + \mathrm{Tr}(\mathbf{Cov}(\widehat{\bm{\alpha}}))\, ,
\end{equation}
where
\begin{equation*}
    \begin{aligned}
    \mathbf{Bias}(\widehat\alphab)&=\EE[\widehat\alphab] -\alphab\, ,\\
    \mathbf{Cov}(\widehat\alphab)&=\EE[(\widehat\alphab-\EE[\widehat\alphab])(\widehat\alphab-\EE[\widehat\alphab])^\top]\, .
    \end{aligned}
\end{equation*}
This decomposition highlights that biased estimators can dominate unbiased ones through variance reduction.

\paragraph{Pure Estimators.}
We study estimators for $\alphab$ that are linear combinations of the following ordinary least squares estimators  obtained on the two data sets individually.
\begin{definition}[Pure Estimators]
\label{def:pure_estimators}
For non-singular moment matrices $\XO^{\top} \XO$ and $\XI^{\top} \XI$,
the \emph{pure estimators} based only on the observational/interventional sample are given by:
\begin{align*}
    \alphaO &\coloneqq (\XO^{\top} \XO)^{-1} \XO^{\top} \YO,\\
    \alphaI &\coloneqq (\XI^{\top} \XI)^{-1} \XI^{\top} \YI.
\end{align*}
\end{definition}
Recall that $\alphaI$ is unbiased while $\alphaO$ has bias $\Deltab$. Their covariances conditionally on $\XO$ and $\XI$ are given by
\begin{equation}
\label{eq:estimator_covariances_true}
\begin{aligned}
    \COV(\alphaO)&=(\XO^{\top}\XO)^{-1}\condvar,
    \\
    \COV(\alphaI)&=(\XI^{\top}\XI)^{-1}\intvar.
\end{aligned}
\end{equation}
Unlike previous work~(see \cref{sec:related_work}), we do not make assumptions about the covariance structure of either estimator.

\paragraph{Almost sure convergence.} To analyze the behavior of estimators in the infinite sample limit, we will employ the following characterization known as \textit{almost sure convergence}.
\begin{definition}[Almost Sure Convergence]
\label{def:consistency}
Let $\mathbf{M}$ be a random matrix with realizations in $\mathbb{R}^{p \times p}$. We say a sequence of random matrices $\widehat{\mathbf{M}}_m$ indexed by $m \in \mathbb{N}$ \emph{converges almost surely} to ~$\mathbf{M}$, denoted $\widehat{\mathbf{M}}_m \asconv \mathbf{M}$, if and only if
\begin{equation*}
    \lim_{m\to\infty} \text{P} \left( \widehat{\mathbf{M}}_m  = \mathbf{M} \right) = 1,
\end{equation*}

where $\text{P}$ denotes probability.
\end{definition}

\section{Matrix Weighted Linear Estimators}\label{sec:weighting_matrix}

We now introduce our class of matrix weighted linear estimators, which combine the two pure estimators from~\cref{def:pure_estimators} using a weight matrix $\Wb$
to obtain a new (better) estimator.

\begin{definition}[$\Wb$-weighted Linear Estimator] \label{def:mat_weight_est}
Let $\Wb \in \mathbb{R}^{p \times p}$ (possibly random). The \emph{$\Wb$-weighted linear estimator} for $\alphab$ is given by
\begin{equation*}
    \alphaWeight{\Wb} \coloneqq \Wb \alphaI + (\mathbf{I}_p - \Wb)  \alphaO.
\end{equation*}
We furthermore refer to $\Wb$ as a weight matrix.
\end{definition}
We will generally think of $n$ as a function of $m$, where we sometimes even explicitly write $n(m)$. However, to simplify notation we index estimators by $m$ only, omitting the dependence $n(m)$.

Note that the purely interventional estimator is a special case of a $\Wb$-weighted estimator with $\Wb=\Ib_p$.
However, while unbiased, it may be subject to high variance if $m$ is very small.\footnote{E.g., consider a one-dimensional setting with $x_i = 1$ if $i$ is even 
and $-1$ otherwise. Then, for odd $m$, $\text{Var}(\alphaI|\XI) \propto (\sum_i x_i^2)^{-1}=\frac{1}{m}$.
} Hence, we generally prefer to employ the observational data as well and choose $\Wb\neq \Ib_p$.

\subsection{Existing Methods as Special Cases}

First, we show that several standard approaches can be viewed as special cases of matrix-weighted estimators.

\paragraph{Data Pooling.}  %
A straightforward approach for combining both data sets 
is to 
compute an estimator on the pooled data. 
The resulting least-squares estimator $\alphaP$ is: 
\begin{equation}
\begin{aligned} \label{equ:data_pooling}
    \alphaP :=&\, (\XP^{\top}\XP)^{-1} \XP^{\top} \YP \\ 
    =&\, (\XO^{\top}\XO + \XI^{\top} \XI)^{-1} (\XO^{\top} \YO + \XI^{\top} \YI) \\
    =&\, \WP  \alphaI + (\mathbf{I} - \WP)  \alphaO,
\end{aligned}
\end{equation}
where 
\begin{equation}\label{eq:W_data_pooling}
    \WP \coloneqq (\XO^{\top}\XO + \XI^{\top} \XI)^{-1} \XI^{\top} \XI.
\end{equation}
We see that
$\alphaP$ indeed qualifies as a valid matrix weighted estimator in the sense of~\cref{def:mat_weight_est}.

However, data pooling can lead to highly undesirable limiting behavior in cases where the amount of observational data~$n(m)$ does not vanish in the limit of infinite interventional data~$m \rightarrow \infty$. 
An example for this is given in the following proposition.
\begin{proposition} \label{prop:MSE_greater_zero}
Let $\lim_{m\to\infty} \frac{n(m)}{m} = c$ for some $c > 0$ and $\bm{\Delta} \neq \mathbf{0}$. Then, it holds that
\begin{equation*}
    \lim_{m\to\infty} \MSE \left(\alphaP \right) > 0.
\end{equation*}
\end{proposition}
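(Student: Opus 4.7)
The plan is to show that $\alphaP$ converges almost surely to a limit strictly different from $\alphab$, after which a Fatou argument prevents the MSE from vanishing. The starting observation is that, because matrix inversion is continuous on the positive definite cone, the pooling weight $\WP$ in~\cref{eq:W_data_pooling} has a clean closed-form limit once we normalize both Gram matrices by $m$.

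First I would identify the almost sure limits of the three building blocks of $\alphaP$. Standard OLS consistency arguments (SLLN applied to $\tfrac{1}{m}\XI^\top\YI$ and, via the perturbed linear model of~\cref{eq:pert_model}, to $\tfrac{1}{n}\XO^\top\YO$) give $\alphaI\asconv\alphab$ and $\alphaO\asconv\alphab+\Deltab$. For the pooling weight, I would factor out $1/m$ in~\cref{eq:W_data_pooling} and apply the SLLN to obtain $\tfrac{1}{m}\XI^\top\XI\asconv\Sigmab_\mathrm{I}$ and $\tfrac{1}{m}\XO^\top\XO=\tfrac{n(m)}{m}\cdot\tfrac{1}{n(m)}\XO^\top\XO\asconv c\,\Sigmab_\mathrm{O}$, where $\Sigmab_\mathrm{I}$ and $\Sigmab_\mathrm{O}$ denote the (positive definite) second-moment matrices of $\Xb$ under $\Pint$ and $\Pobs$, respectively. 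The continuous mapping theorem then yields
\begin{equation*}
    \WP \asconv \Wb^\star := (c\,\Sigmab_\mathrm{O}+\Sigmab_\mathrm{I})^{-1}\Sigmab_\mathrm{I}, \qquad \Ib_p-\Wb^\star = c(c\,\Sigmab_\mathrm{O}+\Sigmab_\mathrm{I})^{-1}\Sigmab_\mathrm{O}.
\end{equation*}

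Combining these three limits in~\cref{equ:data_pooling} via continuous mapping gives
\begin{equation*}
    \alphaP \asconv \Wb^\star\alphab + (\Ib_p-\Wb^\star)(\alphab+\Deltab) = \alphab + c(c\,\Sigmab_\mathrm{O}+\Sigmab_\mathrm{I})^{-1}\Sigmab_\mathrm{O}\,\Deltab,
\end{equation*}
and the additive shift is nonzero because $\Sigmab_\mathrm{O}$ is positive definite, $(c\,\Sigmab_\mathrm{O}+\Sigmab_\mathrm{I})^{-1}$ is invertible, $c>0$, and $\Deltab\neq\mathbf{0}$. Fatou's lemma applied to the non-negative sequence $\norm{\alphaP-\alphab}_2^2$ then converts this a.s.\ statement into the desired lower bound on the MSE:
\begin{equation*}
    \liminf_{m\to\infty}\MSE(\alphaP) \;\ge\; \EE\bigl[\liminf_{m\to\infty}\norm{\alphaP-\alphab}_2^2\bigr] = \norm{c(c\,\Sigmab_\mathrm{O}+\Sigmab_\mathrm{I})^{-1}\Sigmab_\mathrm{O}\,\Deltab}_2^2 > 0,
\end{equation*}
which yields the claim (interpreting the limit in the statement as a $\liminf$, as is customary for non-vanishing results).

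The main obstacle I foresee is purely technical rather than conceptual: justifying that $\tfrac{1}{m}\XI^\top\XI$ has a positive definite almost sure limit requires the intervention noise $\widetilde{\mathbf{N}}_{\Xb}$ to possess a positive definite covariance, and analogously $\Xb$ under $\Pobs$ must have a positive definite covariance. Both conditions are already implicit in the nonsingularity assumptions on $\XI^\top\XI$ and $\XO^\top\XO$ in~\cref{def:pure_estimators}, and I would state them explicitly at the outset so that the SLLN and continuous mapping steps apply without further qualification.
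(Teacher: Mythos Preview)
Your proposal is correct and reaches the same almost sure limit for $\WP$ as the paper, but the passage from that limit to a positive MSE lower bound is handled differently. The paper works at the level of weight matrices: having established $\WP\asconv\Wb_\infty$, it computes the bias of the estimator with the \emph{constant} weight $\Wb_\infty$, bounds it away from zero via the positive definiteness of $\Ib-\Wb_\infty$, and then invokes a separate transfer lemma (their Lemma~B.1) which says that convergence of weight matrices implies $\lim_m\MSE(\alphaWeight{\Wb_\infty})\le\lim_m\MSE(\alphaWeight{\WP})$. You instead push the almost sure convergence all the way through to $\alphaP$ itself, obtain an explicit nonzero limiting vector, and close with Fatou. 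Your route is more self-contained and avoids the transfer lemma entirely; the paper's lemma-based approach is heavier for this single result, but the companion upper-bound lemma (Lemma~B.2) built alongside it is reused in the proofs of Proposition~4.2 and Theorem~4.4, so the investment amortizes. Your remark that the conclusion is really a $\liminf$ statement is fair---the paper is equally informal on that point.
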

The proof of~\cref{prop:MSE_greater_zero} is provided in App.~\ref{sec:proof_prop_MSE_greater_zero}. We note, however, that this does not happen for a vanishing amount of observational data, that is $\lim_{m\to\infty}\frac{n(m)}{m} = 0$ (see Prop.~4.2 in App.~\ref{sec:prop_pool_vanishing_MSE}.

\paragraph{Ridge Regression.} %
The ridge regression estimator on the interventional data, which shrinks~$\alphaI$ towards the origin (see~\cref{sec:related_work} and~\cref{fig:relation_to_shrink}),
is given by
\begin{align*}
    \widehat{\bm{\alpha}}^m_{\text{ridge}} &= (\XI^{\top}\XI + \lambda \mathbf{I}_p)^{-1} \XI^{\top} \YI \\
    &=  \Wridge \alphaI + (\mathbf{I}_p - \Wridge) \mathbf{0},
\end{align*}
where
\begin{equation}
\label{eq:W_ridge}
    \Wridge \coloneqq (\XI^{\top}\XI + \lambda \mathbf{I}_p)^{-1} \XI^{\top}\XI.
\end{equation}
Hence, $\widehat{\bm{\alpha}}^m_{\text{ridge}}$ can also be seen as a special case of a matrix weighted estimator with no observational data and $\alphaO=\mathbf{0}$.
Further, comparing~\cref{eq:W_data_pooling,eq:W_ridge} suggest an interpretation of
ridge regression 
as a \textit{poor man's} data pooling since access to observational data is replaced by a positive definite data matrix $\lambda \mathbf{I}_p$.
However, $\lambda$ is a constant, and therefore $\lim_{m\to\infty} \MSE (\widehat{\bm{\alpha}}^m_{\text{ridge}}) = 0$ even in the setting of ~\cref{prop:MSE_greater_zero}, which contrasts data pooling.

\subsection{Optimal Weighting Schemes} \label{sec:optimal_weighting_matrix}
We now establish theoretically optimal weighting schemes that minimize the mean squared error of $\Wb$-weighted linear estimators $\widehat\alphab^m_\Wb$ for different classes of weight matrices~$\Wb$ by exploiting the specific structure of our problem setting~(\cref{sec:setup}).

\paragraph{Optimal Scalar Weight.}
First, we consider the special case of scalar estimator weighting by considering weight matrices of the form $\Wb=w\Ib_p$
with weight $w \in [0, 1]$.
The optimal scalar weight is then derived as follows:
\begin{align*}
    &\frac{\partial}{\partial w} \MSE\left(\alphaWeight{w\Ib_p}\right) \overset{!}{=} 0
    \\
    \stackrel{\eqref{eq:bias_variance}}{\iff} &\frac{\partial}{\partial w} \left( \norm{\mathbb{E}[\alphaWeight{w\Ib_p} - \bm{\alpha}]}_2^2 + \text{Tr}\left(\mathbf{Cov}\left(\alphaWeight{w\Ib_p}\right)\right)\right) \overset{!}{=} 0
    \\
    \implies &\wopt = \frac{\text{Tr}(\mathbf{Cov}(\alphaO)) + \norm{\bm{\Delta}}_2^2}{\text{Tr}(\mathbf{Cov}(\alphaI)) + \text{Tr}(\mathbf{Cov}(\alphaO)) + \norm{\bm{\Delta}}_2^2}.
\end{align*}

\paragraph{Optimal Diagonal Weight Matrix.}
A more general case is to weigh each dimension individually by different scalars 
$w^{(k)} \in [0, 1], \; k = 1, ..., p$, corresponding to a weight matrix of the form $\Wb=
\mathrm{diag}(\wb)$.
The optimal diagonal weighting $\text{diag}(\wb^{m}_{*}$) is then given by
\begin{equation*}\label{equ:theor_optimal_diagonal}
    w_*^{m (k)} = \frac{\mathrm{Cov}^{(k, k)}(\alphaO) + \Delta^{(k) \, 2}}{\mathrm{Cov}^{(k, k)}(\alphaI) + \mathrm{Cov}^{(k, k)}(\alphaO) + \Delta^{(k) \, 2}}\,, 
\end{equation*}
for $k = 1, ..., p$.
The derivation is analogous to that for the optimal scalar weight above, with the only difference being that we optimize over each dimension separately.

\paragraph{Optimal Weight Matrix.}
Finally, we can also determine the optimum weighting as follows:
\begin{align}
\begin{split} \label{equ:theor_optimal_matrix}
    \Wopt = &\left( \COV(\alphaO) + \bm{\Delta} \bm{\Delta}^{\top} \right) \\ &\left( \COV(\alphaI) + \COV(\alphaO) + \bm{\Delta} \bm{\Delta}^{\top} \right)^{-1}.
\end{split}
\end{align}
A thorough derivation of the proposed weighting schemes can be found in App.~\ref{sec:detailed_derivation}. In addition, we elaborate on how this weighting scheme handles sample imbalance in App. ~\ref{sec:sample_imbalance}.

\begin{remark} \label{rem:recover_data_pooling}
If (i) $\bm{\Delta} = \mathbf{0}$ and (ii) $\condvar = \intvar$, then $\Wopt  = \WP$, i.e., data pooling corresponds to weighing with the optimal weight matrix under these two assumptions.
\end{remark}
\looseness-1 \Cref{rem:recover_data_pooling} can be verified by  simplifying~\cref{equ:theor_optimal_matrix} with assumptions (i) and (ii) and comparing to~\cref{eq:W_data_pooling}. It agrees with our intuition: Ordinary least squares relies on the assumption that $\mathbb{E}[Y | \Xb=\xb_i] = \alphab^\top \xb_i$ with equal variance, for all $i$. Thus, data pooling recovers the optimal estimator if these assumptions are true, i.e., the two conditional distributions $\Pobs(Y|X)$ and $\Pint(Y|\mathrm{do}(X))$ are identical. However, in general, they will not be identical and data pooling then amounts to model misspecification. This is likely to result in a non-vanishing mean squared error for $m \rightarrow \infty$ as highlighted in~\cref{prop:MSE_greater_zero}.

\subsection{Practical Estimators} \label{sec:practicality}
Unfortunately, the optimal weighting derived in~\cref{equ:theor_optimal_matrix} cannot be implemented directly, since the quantities 
$\bm{\Delta}$, $\COV(\alphaO)$, and $\COV(\alphaI)$ are unknown in practice.
To construct practical estimators informed by our theoretical insights, one option is thus to
rely on plug-in estimates of these unknown quantities. 
For $\COV(\alphaI)$ and $\COV(\alphaO)$, we use the standard estimators
\begin{equation*}
\begin{aligned}
    \widehat{\COV}(\alphaI) &= (\XI^{\top} \XI)^{-1} \widehat{\sigma}^2_{Y|\mathrm{do}(\Xb)},
    \\
    \widehat{\COV}(\alphaO) &= (\XO^{\top} \XO)^{-1} \widehat{\sigma}^2_{Y|\Xb},
\end{aligned}
\end{equation*}
which replace the conditional variances in~\cref{eq:estimator_covariances_true} by
\begin{equation*}
\begin{aligned}
    \widehat{\sigma}^2_{Y|\mathrm{do}(\Xb)} &= \frac{1}{m-1} \norm{ \YI - \XI \alphaI }_2^2,
    \\
    \widehat{\sigma}^2_{Y|\Xb} &= \frac{1}{n-1} \norm{ \YO - \XO \alphaO }_2^2.
\end{aligned}
\end{equation*}
For $\bm\Delta$, one may consider using the unbiased estimator
\begin{equation}
\label{eq:delta_plugin}
    \Deltaest = \alphaO - \alphaI.
\end{equation}
Substituting these into~\cref{equ:theor_optimal_matrix} then yields:
\begin{align} \label{eq:opt_matrix_estimator}
\begin{split}
    \Wopthat = &\left( \widehat{\COV}(\alphaO) + \Deltaest \Deltaest^{\top} + \epsilon \Ib_p \right) \\ &\left( \widehat{\COV}(\alphaI) + \widehat{\COV}(\alphaO) + \Deltaest \Deltaest^{\top} + \epsilon \Ib_p \right)^{-1}.
\end{split}
\end{align}
The regularization with $\epsilon > 0$ ensures that the inverse remains stable even in the large sample limit where $\widehat{\COV}(\alphaO)$ and $\widehat{\COV}(\alphaI)$ tend to zero. The reason for instability without such regularization is that $\Wopt$ is not uniquely defined in the infinite sample limit. With regularization, however, we can guarantee that $\Wopthat$  converges to $\Ib_p$ almost surely.

\begin{proposition}[Weight Matrix Convergence] \label{prop:weight_matrix_cons}
    Let $\lim_{m\to\infty} \frac{n(m)}{m}=c$, for some constant $c > 0$. Then, 
    $\Wopthat$ from~\cref{eq:opt_matrix_estimator} converges almost surely to $\Ib_p$, i.e., $\Wopthat \asconv \Ib_p$.
\end{proposition}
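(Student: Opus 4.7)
The plan is to interpret $\Wopthat$ as a product of two random matrices and show that each factor converges almost surely to the same invertible limit $\Deltab\Deltab^\top + \epsilon \Ib_p$. Because $\epsilon > 0$, this limit has minimum eigenvalue at least $\epsilon$, so matrix inversion is continuous at it, and the continuous mapping theorem immediately yields $\Wopthat \asconv (\Deltab\Deltab^\top + \epsilon \Ib_p)(\Deltab\Deltab^\top + \epsilon \Ib_p)^{-1} = \Ib_p$. Everything thus reduces to three almost-sure limits: $\widehat{\COV}(\alphaI) \asconv \mathbf{0}$, $\widehat{\COV}(\alphaO) \asconv \mathbf{0}$, and $\Deltaest\Deltaest^\top \asconv \Deltab\Deltab^\top$.

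The first two limits follow from the strong law of large numbers applied to the empirical moment matrices: $\tfrac{1}{m}\XI^\top\XI \asconv \EE_{\text{int}}[\Xb\Xb^\top]$, and since $n(m)/m \to c > 0$ forces $n(m) \to \infty$, also $\tfrac{1}{n}\XO^\top\XO \asconv \EE_{\text{obs}}[\Xb\Xb^\top]$. Both population limits are positive definite by the Gaussianity assumptions in the SEM, so inversion is continuous at them and $(\XI^\top\XI)^{-1}$, $(\XO^\top\XO)^{-1}$ are $O_{\text{a.s.}}(1/m)$ and $O_{\text{a.s.}}(1/n)$ respectively. Combining with the standard consistency of $\widehat{\sigma}^2_{Y|\mathrm{do}(\Xb)}$ for the finite limit $\intvar$ and of $\widehat{\sigma}^2_{Y|\Xb}$ for $\condvar$, the plug-in covariance estimators vanish almost surely. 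For the third limit, OLS consistency gives $\alphaI \asconv \alphab$ under $\Pint$ since the interventional linear model is correctly specified, while under $\Pobs$ the best linear predictor of $Y$ given $\Xb$ is $(\alphab + \Deltab)^\top \xb$ by~\eqref{eq:pert_model}, so $\alphaO \asconv \alphab + \Deltab$. Subtracting yields $\Deltaest \asconv \Deltab$, and $\Deltaest\Deltaest^\top \asconv \Deltab\Deltab^\top$ by continuity of the outer product.

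The main obstacle is not computational but structural, and it is exactly the reason the $\epsilon \Ib_p$ regularization was introduced in the first place. Without it, in the unconfounded case $\Deltab = \mathbf{0}$, both factors of $\Wopthat$ would converge to the zero matrix and matrix inversion would be discontinuous at the limit, so neither $\Wopthat$ nor the unregularized $\Wopt$ would have a well-defined limit (as already noted in the paragraph preceding the proposition). The regularization keeps the limiting matrix strictly inside the positive definite cone, where inversion is locally Lipschitz; intersecting the three full-measure convergence events above and applying the continuous mapping theorem on this joint event closes the argument.
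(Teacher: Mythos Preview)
Your proof is correct and follows essentially the same approach as the paper's own argument: both factor $\Wopthat$ as a product, show via the strong law of large numbers that $\widehat{\COV}(\alphaI)$ and $\widehat{\COV}(\alphaO)$ vanish almost surely while $\Deltaest \asconv \Deltab$, and then invoke continuity of matrix inversion at the positive-definite limit $\Deltab\Deltab^\top + \epsilon \Ib_p$ to conclude. Your explicit discussion of why the $\epsilon$-regularization is essential for the continuous mapping step is a bit more detailed than the paper's version but otherwise the arguments coincide.
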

The proof for~\cref{prop:weight_matrix_cons} is included in App.~\ref{sec:proof_weight_matrix_cons}. We can show that this convergence
implies that the mean squared error vanishes asymptotically.
\begin{theorem}[Zero Mean Squared Error in the Sample Limit]\label{theo:no_bias_in_the_limit}
    Let $\West$ be any sequence of random weight matrices such that $\West \asconv \Ib_p$ and $\lim_{m\to\infty} \frac{n(m)}{m} = c$ for some constant $c>0$.
    Then,
    \begin{equation*}
        \ulim{m} \; \MSE \left( \alphaWeight{\West} \right) = 0,
    \end{equation*}
    where $\alphaWeight{\West}$ denotes the matrix-weighted linear estimator with weight matrix $\West$, as defined in~\cref{def:mat_weight_est}.
\end{theorem}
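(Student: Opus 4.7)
The plan is to first rewrite the estimation error by adding and subtracting $\alphaI$, yielding
\begin{equation*}
    \alphaWeight{\West} - \alphab \;=\; (\alphaI - \alphab) \;+\; (\Ib_p - \West)(\alphaO - \alphaI).
\end{equation*}
Applying the elementary inequality $\|a+b\|_2^2 \leq 2\|a\|_2^2 + 2\|b\|_2^2$ together with sub-multiplicativity of the spectral norm then yields
\begin{equation*}
    \MSE(\alphaWeight{\West}) \;\leq\; 2\,\MSE(\alphaI) \;+\; 2\,\EE\!\left[\|\Ib_p - \West\|_2^2\,\|\alphaO - \alphaI\|_2^2\right].
\end{equation*}
The proof therefore reduces to showing that both terms on the right vanish as $m \to \infty$.

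For the first term, I would invoke standard OLS consistency. Since $\alphaI$ is unbiased with conditional covariance $(\XI^\top\XI)^{-1}\intvar$, the strong law of large numbers applied to the i.i.d.\ intervention sample gives $\tfrac{1}{m}\XI^\top\XI \asconv \EE[\widetilde{\mathbf{N}}_{\Xb}\widetilde{\mathbf{N}}_{\Xb}^\top]$, which is positive definite by assumption on the intervention distribution. Hence $\mathrm{Tr}\bigl((\XI^\top\XI)^{-1}\bigr)$ is of order $1/m$ almost surely, and Gaussian moment control on the interventional residuals promotes this into an $L^1$ statement, delivering $\MSE(\alphaI) \to 0$.

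For the second term, the three almost-sure convergence statements combine cleanly: $\|\Ib_p - \West\|_2 \asconv 0$ by hypothesis; $\alphaI \asconv \alphab$ by the argument just given; and since $n(m)/m \to c > 0$ forces $n(m)\to\infty$, the analogous OLS reasoning applied to the perturbed observational model in~\cref{eq:pert_model} yields $\alphaO \asconv \alphab + \Deltab$. Consequently $\|\alphaO - \alphaI\|_2^2 \asconv \|\Deltab\|_2^2$, a finite deterministic limit, so the product $\|\Ib_p - \West\|_2^2\,\|\alphaO - \alphaI\|_2^2$ converges to zero almost surely.

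The main obstacle is upgrading this almost-sure convergence of the product to convergence in $L^1$, which is what MSE convergence requires, since a.s.\ convergence alone does not imply convergence of expectations. To close this gap, I would invoke a uniform integrability argument exploiting the problem structure: the Gaussianity of the exogenous noises combined with SLLN control of the design matrices yields a uniform-in-$m$ bound on $\EE[\|\alphaO - \alphaI\|_2^{4}]$, while the natural boundedness of the spectral norm of $\West$—which holds in particular for the practical estimator $\Wopthat$ of~\cref{eq:opt_matrix_estimator} thanks to the $\epsilon\,\Ib_p$ regularization—provides the corresponding uniform bound on $\EE[\|\Ib_p - \West\|_2^4]$. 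A single application of Cauchy--Schwarz together with the dominated convergence theorem then converts the almost-sure convergence of the product into $L^1$ convergence, completing the proof.
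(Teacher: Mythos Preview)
Your route differs from the paper's. The paper does not decompose the error directly; it packages everything into a comparison lemma (Lemma~B.2 in the appendix) stating that if $\West - \Wgeneric \asconv \mathbf{0}$ with $\|\Wgeneric\|_2$ eventually almost-surely bounded, then $\lim_m \MSE(\alphaWeight{\West}) \leq \lim_m \MSE(\alphaWeight{\Wgeneric})$. Taking $\Wgeneric=\Ib_p$ then reduces the theorem in one line to $\MSE(\alphaI)\to 0$. The lemma itself is proved by splitting each expectation with the indicator $\mathbbm{1}\{\|\West-\Wgeneric\|_2\leq\epsilon\}$: on that event $\alphaWeight{\West}$ and $\alphaWeight{\Wgeneric}$ are uniformly close (so only moment control on $\alphaI,\alphaO$ is needed), while the complementary event has vanishing probability and is handled via the almost-sure eventual boundedness of $\West$ inherited from that of $\Wgeneric$.

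Your decomposition is cleaner and you correctly locate the crux---upgrading the a.s.\ convergence of $\|\Ib_p-\West\|_2^2\,\|\alphaO-\alphaI\|_2^2$ to convergence of its expectation. The gap is in how you close it. Your Cauchy--Schwarz/DCT step needs $\sup_m\EE\bigl[\|\Ib_p-\West\|_2^4\bigr]<\infty$, and the hypothesis $\West\asconv\Ib_p$ alone does not give this: for instance $\West=\Ib_p+m\,\Ib_p\cdot\mathbbm{1}\{U_m\leq m^{-2}\}$ with i.i.d.\ uniforms $U_m$ converges a.s.\ to $\Ib_p$ yet has unbounded fourth moments. You notice the issue and retreat to the specific $\Wopthat$, but that does not establish the theorem as stated. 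The paper's indicator-splitting device is precisely what sidesteps a global moment hypothesis on $\West$: by localizing to $\{\|\West-\Ib_p\|_2\leq\epsilon\}$ one never needs more than moment bounds on $\alphaI,\alphaO$, which the linear-Gaussian model supplies. If you wish to keep your direct decomposition, grafting the same $\epsilon$-indicator split onto your second term in place of Cauchy--Schwarz would repair the argument without importing extra assumptions.
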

The proof of Thm. \ref{theo:no_bias_in_the_limit} is included in App.~\ref{sec:proof_no_bias_in_the_limit}. 

\Cref{theo:no_bias_in_the_limit} has the following relevant implication:  
we can incorporate an arbitrarily large amount of biased observational data and are still guaranteed that the bias (and also variance) of $\alphaWeight{\Wopthat}$ will vanish in the infinite sample limit. Moreover, 
this guarantee is independent of $\bm{\Delta}$ and $| \sigma_{Y|\Xb}^2 - \sigma_{Y|\mathrm{do}(\Xb)}^2 |$. 

We also note that \Cref{theo:no_bias_in_the_limit} does not imply unbiasedness of $\alphaWeight{\Wopthat}$ for any finite sample size. 

Further, we note
that almost sure convergence of $\West$ to $\Ib_p$ may generally not
be the only option to achieve vanishing mean squared error.
For example, if $\Deltab = \mathbf{0}$ such that $\alphaO$ is unbiased, we also obtain vanishing mean squared error for almost sure convergence of $\West$ to $\mathbf{0}$.

\subsection{Suitable Inductive Biases}\label{sec:inductive_biases}
Despite the desirable performance established in~\cref{theo:no_bias_in_the_limit}, the
plug-in estimates from~\cref{sec:practicality} will often not perform very well in finite sample settings. The main issue is the estimation of $\bm{\Delta}$, which has a large variance when done according to~\cref{eq:delta_plugin}.
To see this, we first note that
\begin{equation}
\label{eq:delta_covar_decomp}
    \text{Tr}(\COV(\Deltaest)) =  \text{Tr}(\COV(\alphaI)) + \text{Tr}(\COV(\alphaO)),
\end{equation}
since the observational and interventional data are independent.
Now, if we only have a small interventional sample (as is typically the case), 
$\text{Tr}(\COV(\alphaI))$ and hence according to~\cref{eq:delta_covar_decomp} also $\text{Tr}(\COV(\Deltaest))$ will be large.

We therefore explore possible 
inductive biases in the form of additional assumptions on the type of confounding that lead to reduced variance when estimating $\Deltaest$. These inductive biases can be motivated from domain knowledge and validation techniques such as cross-validation \citep{schaffer1993selecting}. Specifically, the application itself may provide some prior knowledge about the nature of confounding, which can then be confirmed by a better validation score compared to 
the other inductive biases/methods proposed here.

To this end, we observe that~\cref{eq:delta_plugin} can be written
as the solution of the following two-step ordinary least squares procedure:
\begin{align}
    &\alphaO  &&\gets \; \mathrm{arg} \; \min_{\bm{\alpha} \in \mathbb{R}^p} \; \left\{ \norm{ \YO \; - \; \XO \bm{\alpha} }_2^2 \right\} \nonumber \\
    &\mathbf{r}  &&\gets \; \YI - \XI \alphaO \nonumber \\
    &\Deltaest  &&\gets \; \mathrm{arg} \; \min_{\bm{\Delta} \in \mathbb{R}^p} \; \left\{ \norm{ \mathbf{r} \; + \; \XI \bm{\Delta} }_2^2 \right\}.\label{equ:optimization}
\end{align}

\paragraph{Small $\norm{\bm{\Delta}}_2$.}
In some settings, we may be willing to assume that, despite the existence of unobserved confounders, the resulting confounding bias is rather weak, i.e., that its Euclidean norm $\norm{\bm{\Delta}}_2$ is small. 
Since this is precisely the assumption underlying ridge regression, 
we reformulate~%
\eqref{equ:optimization} using a regularizer $\lambda_{\ell^2} > 0$ as
\begin{equation*}
    \Deltaest^{\ell^2} \; \gets \; \mathrm{arg} \; \min_{\bm{\Delta} \in \mathbb{R}^p} \; \left\{ \norm{\mathbf{r} \; + \; \XI \bm{\Delta} }_2^2 \; + \; \lambda_{\ell^2} \norm{\bm{\Delta}}_2^2\right\}, 
\end{equation*}
for which a closed-form solution of the same computational complexity as least squares exists. 
We refer to the weight matrix estimate obtained by using $\Deltaest^{\ell^2}$ in place of $\Deltaest$ in~\cref{eq:opt_matrix_estimator} as~$\Wweak$. By~\cref{prop:weak_weight_consistency}, we still obtain the same limiting guarantees of~\cref{theo:no_bias_in_the_limit} for $\Wweak$, as long as $\lambda_{\ell^2}$ is fixed ($\lambda_{\ell^2}$ is independent of $m$, $\XP$, $\YP$).
\begin{proposition}\label{prop:weak_weight_consistency}
Let $\lim_{m\to\infty} \frac{n(m)}{m} = c$ and $\lambda_{\ell^2} > 0$ be fixed. Then,
    \begin{equation*}
        \ulim{m} \; \MSE \left( \alphaWeight{\Wweak} \right) = 0.
    \end{equation*}
\end{proposition}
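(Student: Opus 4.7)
The plan is to invoke Theorem 4.4, reducing the task to showing $\Wweak \asconv \Ib_p$. Once that is established, Theorem 4.4 immediately delivers $\lim_{m\to\infty} \MSE(\alphaWeight{\Wweak}) = 0$, since its hypothesis requires only almost-sure convergence of the weight matrix to $\Ib_p$ together with the growth condition $n(m)/m \to c > 0$ that we are given.

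The only difference between $\Wweak$ and $\Wopthat$ is that the plug-in $\Deltaest = \alphaO - \alphaI$ is replaced by the ridge-regularized alternative $\Deltaest^{\ell^2}$. Reusing the argument behind Proposition 4.3, the plug-in covariances vanish: $\widehat{\COV}(\alphaO), \widehat{\COV}(\alphaI) \asconv \mathbf{0}$ (both inverse moment matrices scale as $1/n$ and $1/m$ respectively while the residual variance estimates converge to finite limits). Consequently, if $\Deltaest^{\ell^2}$ converges almost surely to a finite deterministic limit, then both the numerator and denominator of
\begin{equation*}
    \Wweak = \bigl(\widehat{\COV}(\alphaO) + \Deltaest^{\ell^2}\Deltaest^{\ell^2\top} + \epsilon\Ib_p\bigr)\bigl(\widehat{\COV}(\alphaI) + \widehat{\COV}(\alphaO) + \Deltaest^{\ell^2}\Deltaest^{\ell^2\top} + \epsilon\Ib_p\bigr)^{-1}
\end{equation*}
converge to the same invertible matrix (the common limit plus $\epsilon \Ib_p$), and by the continuous mapping theorem applied to matrix inversion we obtain $\Wweak \asconv \Ib_p$.

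The crux is therefore the almost-sure convergence of $\Deltaest^{\ell^2}$. Writing the ridge closed form as
\begin{equation*}
    \Deltaest^{\ell^2} = \bigl(\tfrac{1}{m}\XI^\top\XI + \tfrac{\lambda_{\ell^2}}{m}\Ib_p\bigr)^{-1}\bigl(\tfrac{1}{m}\XI^\top\XI\, \alphaO \;-\; \tfrac{1}{m}\XI^\top\YI\bigr),
\end{equation*}
I would apply the strong law of large numbers block by block: $\tfrac{1}{m}\XI^\top\XI$ converges almost surely to the positive-definite covariance $\mathbb{E}_\text{int}[\Xb\Xb^\top]$; the regularization $\tfrac{\lambda_{\ell^2}}{m}\Ib_p$ vanishes because $\lambda_{\ell^2}$ is fixed; $\tfrac{1}{m}\XI^\top\YI \asconv \mathbb{E}_\text{int}[\Xb\Xb^\top]\alphab$, using the intervention property $\Xb \perp \Zb$ under $\Pint$ together with $\EE[N_Y]=0$; and OLS consistency under confounding yields $\alphaO \asconv \alphab + \Deltab$ as $n(m)\to\infty$ (which holds because $c>0$). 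Combining these via the continuous mapping theorem gives $\Deltaest^{\ell^2} \asconv \Deltab$, a finite deterministic limit, which completes the argument.

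The main obstacle is handling the inversion $(\tfrac{1}{m}\XI^\top\XI + \tfrac{\lambda_{\ell^2}}{m}\Ib_p)^{-1}$ cleanly when passing to the limit, since one must bound the smallest eigenvalue of $\tfrac{1}{m}\XI^\top\XI$ away from zero almost surely for sufficiently large $m$; this follows from the positive-definiteness of its limit. A secondary but important subtlety is that the regularization $\epsilon > 0$ is indispensable for the final inversion step: without it, the limiting denominator $\Deltab\Deltab^\top$ has rank at most one and is singular whenever $p > 1$, so $\Wweak$ would not admit a well-defined limit to begin with.
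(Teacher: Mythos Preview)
Your proposal is correct and follows essentially the same route as the paper: reduce to $\Wweak \asconv \Ib_p$ via Theorem~4.4, reuse the covariance-vanishing argument from Proposition~4.3, and then verify $\Deltaest^{\ell^2} \asconv \Deltab$ through the ridge closed form, the strong law of large numbers, and the continuous mapping theorem. Your treatment is in fact slightly more explicit than the paper's on the eigenvalue lower bound for the inversion and on the role of $\epsilon > 0$ in guaranteeing invertibility of the limiting denominator.
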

The proof for Proposition \ref{prop:weak_weight_consistency} is given in App.~\ref{sec:proof_weak_weight_consistency}.
\paragraph{Small $\norm{\bm{\Delta}}_0$.}
In other settings, we may have prior beliefs that only some treatment variables~$X_i$ are confounded, i.e., that the number of nonzero elements of $\bm{\Delta}$, denoted by $\norm{\bm{\Delta}}_0$, is small.
If we are unaware of
which treatments are confounded, but
$p$ is small, we can simply fit all $2^p$ possible models or use best subset selection~\citep[p. 205]{james2013introduction}. For larger $p$, a more efficient technique known as the LASSO employs $\ell^1$-regularization and has become a standard tool~\citep{tibshirani1996regression}. For the LASSO, approximate optimization techniques exist that have a computational complexity of $\mathcal{O}(p^2 n)$ \citep{efron2004lars}, which is of the same order as ordinary least squares. In this case, we reformulate
\eqref{equ:optimization} as
\begin{equation*}
    \Deltaest^{\ell^1} \; \gets \; \mathrm{arg} \; \min_{\bm{\Delta} \in \mathbb{R}^p} \; \left\{ \norm{ \mathbf{r} \; + \; \XI \bm{\Delta} }_2^2 \; + \; \lambda_{\ell^1} \norm{\bm{\Delta}}_1\right\},
\end{equation*}
for some $\lambda_{\ell^1} > 0$, and where $\norm{\; \cdot \;}_1$ denotes the $\ell^1$-norm. We refer to the weight matrix obtained by using $\Deltaest^{\ell^1}$ in place of $\Deltaest$ in~\cref{eq:opt_matrix_estimator} as~$\Wsparse$. 

\section{Experiments}\label{sec:experiments}
\begin{table*}[t]
    \centering
    \caption{\textbf{Mean squared error for the causal effect parameter $\alphab$ using various weighting schemes for different types of confounding.} 
    The standard plug-in optimal weight matrix estimator $\Wopthat$ generally does not perform well, while $\Wweak$ and $\Wsparse$, which benefit from prior knowledge, outperform prior work. Note that $\Wopt$ is an oracle that is generally not computable in practice. Numbers correspond to mean $\pm$ std.\ dev.\ over 1000 runs; the best method is highlighted in bold.}
    \label{tab:sim_experiment_outcomes}
    \resizebox{\textwidth}{!}{
    \small

    \caption{\textbf{Performance for varying dataset sizes  and ratios.}
    \textit{(Left)} All methods improve as the amount of 
    data is increased. More sophisticated weighting schemes outperform the purely interventional ($\WI$) and plug in estimators ($\Wopthat$) (not depicted due to close performance overlap with pure interventional), whereas data pooling ($\WP$) works well only for small $m$ and $\gamma$.
    \textit{(Right)} When keeping $m$ fixed and adding more observational data, $\Wweak$ clearly works best in strongly confounded ($\gamma=5$) settings. MSE and $\frac{n}{m}$ are plotted on a $\text{log}_{10}$ scale. Shaded areas indicate $\pm 0.5$ standard deviations. 
    }
    \label{fig:plots}
\end{figure*}
We investigate the empirical behavior of our proposed matrix weighted estimators in a finite sample setting and compare them with baselines and existing methods through simulations on synthetic data.\footnote{The source code for all experiments is available at: \href{https://github.com/rudolfwilliam/matrix_weighted_linear_estimators}{https://github.com/rudolfwilliam/matrix\_weighted\_linear\_estimators}}
To this end, we consider different experimental settings in which we vary the strength and sparsity of confounding, as well as the ratio and absolute quantity of observational and interventional data. 

\paragraph{Compared Methods.}
We report the mean squared error attained by the theoretically optimal weight matrix $\Wopt$ from~\eqref{equ:theor_optimal_matrix} as an oracle, as well as the plug-in estimator~$\Wopthat$ thereof from~\eqref{eq:opt_matrix_estimator}, and the regularized regression-based $\Wweak$ and $\Wsparse$ from~\cref{sec:inductive_biases}. 
For the latter two, we choose the regularization hyperparameters $\lambda_{\ell^2}$ and $\lambda_{\ell^1}$ by cross-validation on the interventional data. 
As baselines, we consider only using interventional data~($\WI=\Ib_p$) and data pooling according to $\WP$ from~\eqref{eq:W_data_pooling}.
We also compare to the~\citet{rosenman2020combining} scalar weighting scheme which was proposed for vectors of binary treatment effects and is given by $\Wb=\wrose\Ib_p$ with
\begin{equation*}
    \widehat{w}^m_{\text{rm}} \; := \; \max \left\{ 1 \; - \; \frac{\text{Tr} \left( \widehat{\textbf{Cov}}\left(\alphaI\right) \right)}{\norm{ \alphaI - \alphaO }_2^2}, \; 0 \right\}.
\end{equation*}
We emphasize that other commonly used methods for causal effect estimation from observational data such as propensity score matching~\citep{imai2004treatmentregimes} are not applicable, 
because they require the relevant confounders to be observed, which is not the case in our setting.

\paragraph{General Setup.} In all experiments, we use $p=30$ treatments, a one-dimensional ($d=1$) confounder~$Z$, and unit/isotropic (co)variances: $\sigma_{N_Y}^2= \sigma_{N_Z}^2= 1$, $\Sigmab_{\Nb_\Xb}=\Ib_p$. We sample $\tilde{\Nb}_{\Xb} \sim \mathcal{N}(\mathbf{0}, \COV(\XO))$,
$\bm{\alpha} \sim \mathcal{N}(\mathbf{0}, 9\mathbf{I}_p)$, and choose $\bb$ and $\gamma$ depending on the settings described below.
Unless otherwise specified, we then draw $m = 300$ interventional and $n = 600$ observational examples from $\Pint$ and $\Pobs$, respectively, and compute estimates of $\bm{\alpha}$ using the different weighting approaches. We repeat this procedure $1000$ times and report the resulting mean and standard deviation of the mean squared error.

\paragraph{Different Types of Confounding.}
In our main experiment, we investigate how estimators perform under different types of confounding encoded by ~\eqref{equ:SA_X} and~\eqref{eq:outcome}, specifically by the parameters $\bb\in\RR^p$ and $\gamma\in \RR$ (for a scalar confounder $Z$). 
For \textit{spread} confounding, we sample $\bb \sim \mathcal{N}(\mathbf{0}, \mathbf{I}_p)$ such that the confounder affects all treatment variables almost surely.
For \textit{sparse} confounding, we sample $b^{(k)}\sim \Ncal(0,1)$ for $k=1,...,5$, and $b^{(k)}=0$ otherwise,  such that only the first five treatments are confounded.
In both cases, we investigate $\gamma\in\{1,5\}$ which controls the strength of $Z\to~Y$ and thus the extent to which $\bm{\Delta} = 0$ is violated.

\paragraph{Main Results.} The results are presented in~\cref{tab:sim_experiment_outcomes}.
We find that our regularized estimators generally perform well, particularly when the underlying assumptions are satisfied: under sparse confounding $\Wsparse$ works best, and in the spread confounding case $\Wweak$ is only narrowly outperformed by $\wrose$ and $\WP$ when $\gamma=1$. Data pooling works relatively well when $\gamma=1$ (compared to $\gamma=5$) where the  violation of the identically distributed assumption is weak and the variance from estimating unknown quantities is not compensated by the bias reduction.
In contrast, both the purely interventional approach $\WI$ and the plug-in estimator $\Wopthat$ do not perform very well in this finite sample setting due to high variance, as explained in~\cref{sec:inductive_biases}.

\paragraph{Varying Data Set Sizes and Ratios.}
In~\cref{fig:plots}, we investigate how the different estimators behave across different data set sizes and ratios for the spread confounding setting.
In the left two plots, we vary the amount of interventional data $m$ while fixing the amount of observational data to $n=3m$. 
The results confirm our theoretical results: For small data set sizes, data pooling is a worthwhile alternative to more sophisticated weights, in particular if the violation against the assumption of identical distribution is minor ($\gamma = 1$). However, for large enough data set sizes, the approaches from both previous work and ours achieve a better score. Particularly, we see that $\Wweak$ outperforms all other weights in both scenarios for large enough data sets.

In the right two plots, we keep $m=500$ fixed and change~$n$ and thus the ratio of interventional to observational data. Unsurprisingly, we find that the mean squared error of $\WI$ remains constant. For strong confounding ($\gamma = 5$), we see that $\Wweak$ adapts best with a considerable margin: Unlike $\wrose$, it explicitly takes into account (an estimate of) the covariance structure of $\alphaO$  in constructing the weight matrix.
\section{Discussion}\label{sec:other_fields}
\paragraph{Connection to Transfer Learning.}\label{sec:transfer_learning}
Our setting bears resemblance to transfer and multi-task learning~\citep{thrun1995learning,caruana1997multitask}, specifically to supervised domain adaptation, which aims to leverage knowledge from a source domain to improve a model in a target domain, for which typically much less data is available.
In our case, we aim to use the source model  $\alphaO$, learned by estimating $\EE[Y|\Xb=\xb]$ in the observational setting, to improve our (high-variance) target model $\alphaI$ of $\EE[Y|\text{do}(\Xb\leftarrow\xb)]$.
Transfer learning 
can only work if the domains are sufficiently similar, resulting in numerous approaches leveraging different assumptions about shared components~\citep{quinonero2008dataset}. These assumptions are often phrased in causal terms~\citep{scholkopf2012causal,zhang2013domain,gong2016domain,rojas2018invariant}.
Similarly, our observational (source) and interventional (target) domains share the same causal model and only differ in the treatment assignment mechanisms~\eqref{equ:SA_X} and~\eqref{equ:replaced_SA}. Still, the bias in~\eqref{equ:Delta} can in theory be arbitrary large, and our methods from~\cref{sec:inductive_biases} implicitly rely on it being small or sparse.

\paragraph{Beyond Linear Regression.}
Some of our derivations and theoretical results rely on the fact that the confounding bias in~\eqref{eq:pert_model} is linear in $\xb$. 
For the class of \textit{linear} SCMs~\eqref{equ:multivarSCMconfounder}--\eqref{eq:outcome}, Gaussianity is necessary and sufficient\footnote{Note $\EE[Y|\Xb]=\gammab^\top\EE[\Zb|\Xb]+\alphab^\top\Xb$ and $\EE[\Zb|\Xb]$ is linear in $\Xb$ only in the Gaussian case~\citep[][Thm.~4.2]{peters2017elements}.} for this condition to hold, but it may also hold for more general classes of SCMs. 
For binary treatments $\Xb\in\{0,1\}^p$, in particular, it is always possible to write the difference between the biased and unbiased average treatment effect estimates using a constant offset $\Deltab$ akin to \eqref{eq:delta_plugin}, irrespective of the confounding relationship.\footnote{Specifically, we have $\Deltab= \EE [Y | \Xb = \mathbf{1} ] - \EE [Y | \Xb = \mathbf{0}] - \left( \EE [Y | \text{do}(\Xb \leftarrow \mathbf{1})] - \EE [Y | \text{do}(\Xb \leftarrow \mathbf{0}) ] \right)$.}
Future work may thus investigate nonlinear extensions, e.g., by drawing inspiration from semi-parametrics~\citep{robins1995semiparametric}, doubly robust estimation~\citep{bang2005doubly}, and debiased machine learning~\citep{chernozhukov2018double}.

\paragraph{Incorporating Covariates.}
Our current formulation does not \textit{explicitly} account for
observed confounders, or pre-treatment covariates, which need to be adjusted for in the observational setting to avoid introducing further bias. 
In principle, such covariates can simply be included in~$\Xb$, as different treatment components $X_i$ are allowed to be dependent.
However, this may result in high-dimensional treatments and thus render full randomization in~\eqref{equ:replaced_SA} unrealistic.
Other covariates, while unproblematic with regard to bias, may help further reduce variance~\citep{henckel2022graphical}. 
Extending our framework to incorporate different types of covariates is thus a worthwhile future direction.

\section{Conclusion}\label{sec:conclusion}
In the present work, we have introduced a new class of matrix weighted linear estimators for learning causal effects of continuous treatments from finite observational and interventional data. 
Here, our focus has been on optimizing statistical efficiency, which complements the vast causal inference literature on identification from heterogeneous data. 
Our estimators are connected to classical ideas from shrinkage estimation applied to causal learning and provide a unifying account of data pooling and ridge regression, which emerge as special cases.
We show that our estimators are theoretically grounded and compare favorably to baselines and prior work in simulations.
While we restricted our analysis to linear models for now, we hope that the insights and methods developed here will also be useful for a broader class of causal models and transfer learning problems.

\begin{acknowledgements}
    We thank the anonymous reviewers for useful comments and suggestions that helped improve the manuscript.

    We thank the Branco Weiss Fellowship, administered by ETH Zurich, for the support.
    This work was further supported by the T\"ubingen AI Center and by the German Research Foundation (DFG) under Germany’s excellence strategy – EXC number 2064/1 – project number 390727645.
\end{acknowledgements}

{%
\bibliography{kladny_306}

\begin{thebibliography}{58}
\providecommand{\natexlab}[1]{#1}
\providecommand{\url}[1]{\texttt{#1}}
\expandafter\ifx\csname urlstyle\endcsname\relax
  \providecommand{\doi}[1]{doi: #1}\else
  \providecommand{\doi}{doi: \begingroup \urlstyle{rm}\Url}\fi

\bibitem[Angrist and Pischke(2009)]{angrist2009mostly}
J.~D. Angrist and J.-S. Pischke.
\newblock \emph{{Mostly harmless econometrics: An empiricist's companion}}.
\newblock Princeton University Press, 2009.

\bibitem[Angrist et~al.(1996)Angrist, Imbens, and Rubin]{angrist1996identification}
J.~D. Angrist, G.~W. Imbens, and D.~B. Rubin.
\newblock {Identification of causal effects using instrumental variables}.
\newblock \emph{Journal of the American Statistical Association}, 91\penalty0 (434):\penalty0 444--455, 1996.

\bibitem[Bang and Robins(2005)]{bang2005doubly}
H.~Bang and J.~M. Robins.
\newblock Doubly robust estimation in missing data and causal inference models.
\newblock \emph{Biometrics}, 61\penalty0 (4):\penalty0 962--973, 2005.

\bibitem[Bareinboim and Pearl(2012)]{bareinboim2012causal}
E.~Bareinboim and J.~Pearl.
\newblock {Causal Inference by Surrogate Experiments: z-Identifiability}.
\newblock In \emph{{Proceedings of the 28th Conference on Uncertainty in Artificial Intelligence}}, pages 113--120, 2012.

\bibitem[Bareinboim and Pearl(2016)]{bareinboim2016causal}
E.~Bareinboim and J.~Pearl.
\newblock {Causal inference and the data-fusion problem}.
\newblock \emph{Proceedings of the National Academy of Sciences}, 113\penalty0 (27):\penalty0 7345--7352, 2016.

\bibitem[Caruana(1997)]{caruana1997multitask}
R.~Caruana.
\newblock {Multitask Learning}.
\newblock \emph{Machine Learning}, 28\penalty0 (1):\penalty0 41--75, 1997.

\bibitem[{\'C}evid et~al.(2020){\'C}evid, B{\"u}hlmann, and Meinshausen]{cevid2020spectral}
D.~{\'C}evid, P.~B{\"u}hlmann, and N.~Meinshausen.
\newblock {Spectral Deconfounding via Perturbed Sparse Linear Models}.
\newblock \emph{The Journal of Machine Learning Research}, 21\penalty0 (1):\penalty0 9442--9482, 2020.

\bibitem[Cheng and Cai(2021)]{cheng2021adaptive}
D.~Cheng and T.~Cai.
\newblock {Adaptive Combination of Randomized and Observational Data}.
\newblock \emph{arXiv:2111.15012}, 2021.

\bibitem[Chernozhukov et~al.(2018)Chernozhukov, Chetverikov, Demirer, Duflo, Hansen, Newey, and Robins]{chernozhukov2018double}
V.~Chernozhukov, D.~Chetverikov, M.~Demirer, E.~Duflo, C.~Hansen, W.~Newey, and J.~Robins.
\newblock Double/debiased machine learning for treatment and structural parameters: Double/debiased machine learning.
\newblock \emph{The Econometrics Journal}, 21\penalty0 (1), 2018.

\bibitem[Colnet et~al.(2020)Colnet, Mayer, Chen, Dieng, Li, Varoquaux, Vert, Josse, and Yang]{colnet2020causal}
B.~Colnet, I.~Mayer, G.~Chen, A.~Dieng, R.~Li, G.~Varoquaux, J.-P. Vert, J.~Josse, and S.~Yang.
\newblock {Causal inference methods for combining randomized trials and observational studies: a review}.
\newblock \emph{arXiv:2011.08047}, 2020.

\bibitem[Correa and Bareinboim(2020)]{correa2020general}
J.~Correa and E.~Bareinboim.
\newblock {General transportability of soft interventions: Completeness results}.
\newblock \emph{Advances in Neural Information Processing Systems}, 33:\penalty0 10902--10912, 2020.

\bibitem[Eberhardt and Scheines(2007)]{eberhardt2007interventions}
F.~Eberhardt and R.~Scheines.
\newblock {Interventions and causal inference}.
\newblock \emph{Philosophy of science}, 74\penalty0 (5):\penalty0 981--995, 2007.

\bibitem[Efron(2012)]{efron2012large}
B.~Efron.
\newblock \emph{{Large-scale inference: empirical {B}ayes methods for estimation, testing, and prediction}}, volume~1.
\newblock Cambridge University Press, 2012.

\bibitem[Efron and Morris(1973)]{efron1973stein}
B.~Efron and C.~Morris.
\newblock {Stein's estimation rule and its competitors—an empirical Bayes approach}.
\newblock \emph{Journal of the American Statistical Association}, 68\penalty0 (341):\penalty0 117--130, 1973.

\bibitem[Efron et~al.(2004)Efron, Hastie, Johnstone, and Tibshirani]{efron2004lars}
B.~Efron, T.~Hastie, I.~Johnstone, and R.~Tibshirani.
\newblock {Least Angle Regression}.
\newblock \emph{The Annals of Statistics}, 32\penalty0 (2), 2004.

\bibitem[Fisher(1936)]{fisher1936design}
R.~A. Fisher.
\newblock {Design of experiments}.
\newblock \emph{British Medical Journal}, 1\penalty0 (3923):\penalty0 554, 1936.

\bibitem[Gong et~al.(2016)Gong, Zhang, Liu, Tao, Glymour, and Sch{\"o}lkopf]{gong2016domain}
M.~Gong, K.~Zhang, T.~Liu, D.~Tao, C.~Glymour, and B.~Sch{\"o}lkopf.
\newblock Domain adaptation with conditional transferable components.
\newblock In \emph{International Conference on Machine Learning}, pages 2839--2848, 2016.

\bibitem[Green and Strawderman(1991)]{strawderman1991stein}
E.~J. Green and W.~E. Strawderman.
\newblock {A James-Stein Type Estimator for Combining Unbiased and Possibly Biased Estimators}.
\newblock \emph{Journal of the American Statistical Association}, 86\penalty0 (416):\penalty0 1001--1006, 1991.

\bibitem[Green et~al.(2005)Green, Strawderman, Amateis, and Reams]{green2005improved}
E.~J. Green, W.~E. Strawderman, R.~L. Amateis, and G.~A. Reams.
\newblock {Improved Estimation for Multiple Means with Heterogeneous Variances}.
\newblock \emph{Forest Science}, 51\penalty0 (1):\penalty0 1--6, 2005.

\bibitem[Hastie et~al.(2009)Hastie, Tibshirani, and Friedman]{hastie2009elements}
T.~Hastie, R.~Tibshirani, and J.~Friedman.
\newblock \emph{{The Elements of Statistical Learning}}.
\newblock Springer, 2009.

\bibitem[Hatt et~al.(2022)Hatt, Berrevoets, Curth, Feuerriegel, and van~der Schaar]{hatt2022combining}
T.~Hatt, J.~Berrevoets, A.~Curth, S.~Feuerriegel, and M.~van~der Schaar.
\newblock {Combining observational and randomized data for estimating heterogeneous treatment effects}.
\newblock \emph{arXiv:2202.12891}, 2022.

\bibitem[Henckel et~al.(2022)Henckel, Perkovi{\'c}, and Maathuis]{henckel2022graphical}
L.~Henckel, E.~Perkovi{\'c}, and M.~H. Maathuis.
\newblock Graphical criteria for efficient total effect estimation via adjustment in causal linear models.
\newblock \emph{Journal of the Royal Statistical Society Series B}, 84\penalty0 (2):\penalty0 579--599, 2022.

\bibitem[Hern\'an and Robins(2020)]{hernan2020causal}
M.~A. Hern\'an and J.~M. Robins.
\newblock \emph{{Causal inference: What if}}.
\newblock Boca Raton: Chapman \& Hall/CRC, 2020.

\bibitem[Hoerl(1970)]{hoerl1970ridge}
A.~E. Hoerl.
\newblock {Ridge Regression: Biased Estimation for Nonorthogonal Problems}.
\newblock \emph{Technometrics}, 12\penalty0 (1):\penalty0 55--67, 1970.

\bibitem[Huang and Valtorta(2006)]{huang2006identifiability}
Y.~Huang and M.~Valtorta.
\newblock {Identifiability in Causal Bayesian Networks: A Sound and Complete Algorithm}.
\newblock In \emph{{Proceedings of the National Conference on Artificial Intelligence}}, volume~21, pages 1149--1154, 2006.

\bibitem[Ilse et~al.(2021)Ilse, Forré, Welling, and Mooij]{ilse2021combining}
M.~Ilse, P.~Forré, M.~Welling, and J.~M. Mooij.
\newblock {Combining Interventional and Observational Data Using Causal Reductions}.
\newblock \emph{arXiv:2103.04786}, pages 1--42, 2021.

\bibitem[Imai and Dyk(2004)]{imai2004treatmentregimes}
K.~Imai and D.~A.~V. Dyk.
\newblock {Causal Inference with General Treatment Regimes: Generalizing the Propensity Score}.
\newblock \emph{Journal of the American Statistical Association}, 99\penalty0 (467):\penalty0 854--866, 2004.

\bibitem[Imbens and Rubin(2015)]{imbens2015causal}
G.~W. Imbens and D.~B. Rubin.
\newblock \emph{{Causal inference in statistics, social, and biomedical sciences}}.
\newblock Cambridge University Press, 2015.

\bibitem[James et~al.(2013)James, Witten, Hastie, and Tibshirani]{james2013introduction}
G.~James, D.~Witten, T.~Hastie, and R.~Tibshirani.
\newblock \emph{{An Introduction to Statistical Learning}}.
\newblock Springer, 2013.

\bibitem[James and Stein(1961)]{james1961estimation}
W.~James and C.~Stein.
\newblock {Estimation with Quadratic Loss}.
\newblock In \emph{{Proceedings of the 4th Berkeley Symposium on Probability and Statistics}}. Berkeley, CA: University of California Press, 1961.

\bibitem[Kallus et~al.(2018)Kallus, Puli, and Shalit]{kallus2018removing}
N.~Kallus, A.~M. Puli, and U.~Shalit.
\newblock {Removing hidden confounding by experimental grounding}.
\newblock \emph{Advances in Neural Information Processing Systems}, 31, 2018.

\bibitem[Lee et~al.(2020)Lee, Correa, and Bareinboim]{sanghack2020gID}
S.~Lee, J.~D. Correa, and E.~Bareinboim.
\newblock {General Identifiability with Arbitrary Surrogate Experiments}.
\newblock In \emph{Proceedings of the 35th Uncertainty in Artificial Intelligence Conference}, pages 389--398, 2020.

\bibitem[Morgan and Winship(2014)]{morgan2014counterfactuals}
S.~L. Morgan and C.~Winship.
\newblock \emph{{Counterfactuals and Causal Inference: Methods and Principles for Social Research}}.
\newblock Cambridge University Press, 2014.

\bibitem[Neyman(1923)]{neyman1923application}
J.~Neyman.
\newblock {On the application of probability theory to agricultural experiments: essay on principles}.
\newblock \emph{Statistical Science}, 5:\penalty0 465--480, 1923.

\bibitem[Pearl(1995)]{pearl1995causal}
J.~Pearl.
\newblock {Causal diagrams for empirical research}.
\newblock \emph{Biometrika}, 82\penalty0 (4):\penalty0 669--688, 1995.

\bibitem[Pearl(2009)]{pearl2009causality}
J.~Pearl.
\newblock \emph{{Causality: models, reasoning, and inference}}.
\newblock Cambridge University Press, 2nd edition, 2009.

\bibitem[Pearl and Bareinboim(2014)]{pearl2014external}
J.~Pearl and E.~Bareinboim.
\newblock {External Validity: From Do-Calculus to Transportability Across Populations}.
\newblock \emph{Statistical Science}, 29\penalty0 (4):\penalty0 579--595, 2014.

\bibitem[Peters et~al.(2017)Peters, Janzing, and Sch{\"o}lkopf]{peters2017elements}
J.~Peters, D.~Janzing, and B.~Sch{\"o}lkopf.
\newblock \emph{{Elements of Causal Inference: Foundations and Learning Algorithms}}.
\newblock MIT Press, 2017.

\bibitem[Qui{\~n}onero-Candela et~al.(2008)Qui{\~n}onero-Candela, Sugiyama, Schwaighofer, and Lawrence]{quinonero2008dataset}
J.~Qui{\~n}onero-Candela, M.~Sugiyama, A.~Schwaighofer, and N.~D. Lawrence.
\newblock \emph{{Dataset Shift in Machine Learning}}.
\newblock MIT Press, 2008.

\bibitem[Reichenbach(1956)]{reichenbach1956direction}
H.~Reichenbach.
\newblock \emph{{The Direction of Time}}, volume~65.
\newblock University of California Press, 1956.

\bibitem[Robbins(1964)]{robbins1964empirical}
H.~Robbins.
\newblock {The empirical {B}ayes approach to statistical decision problems}.
\newblock \emph{The Annals of Mathematical Statistics}, 35\penalty0 (1):\penalty0 1--20, 1964.

\bibitem[Robins and Rotnitzky(1995)]{robins1995semiparametric}
J.~M. Robins and A.~Rotnitzky.
\newblock Semiparametric efficiency in multivariate regression models with missing data.
\newblock \emph{Journal of the American Statistical Association}, 90\penalty0 (429):\penalty0 122--129, 1995.

\bibitem[Rojas-Carulla et~al.(2018)Rojas-Carulla, Sch{\"o}lkopf, Turner, and Peters]{rojas2018invariant}
M.~Rojas-Carulla, B.~Sch{\"o}lkopf, R.~Turner, and J.~Peters.
\newblock {Invariant Models for Causal Transfer Learning}.
\newblock \emph{The Journal of Machine Learning Research}, 19\penalty0 (1):\penalty0 1309--1342, 2018.

\bibitem[Rosenman et~al.(2020)Rosenman, Basse, Owen, and Baiocchi]{rosenman2020combining}
E.~Rosenman, G.~Basse, A.~Owen, and M.~Baiocchi.
\newblock {Combining observational and experimental datasets using shrinkage estimators}.
\newblock \emph{Biometrics}, 2020.

\bibitem[Rosenman et~al.(2022)Rosenman, Owen, Baiocchi, and Banack]{rosenman2022propensity}
E.~T. Rosenman, A.~B. Owen, M.~Baiocchi, and H.~R. Banack.
\newblock {Propensity score methods for merging observational and experimental datasets}.
\newblock \emph{Statistics in Medicine}, 41\penalty0 (1):\penalty0 65--86, 2022.

\bibitem[Rubin(1974)]{rubin1974estimating}
D.~B. Rubin.
\newblock {Estimating causal effects of treatments in randomized and nonrandomized studies.}
\newblock \emph{Journal of educational Psychology}, 66\penalty0 (5):\penalty0 688--701, 1974.

\bibitem[Schaffer(1993)]{schaffer1993selecting}
C.~Schaffer.
\newblock {Selecting a Classification Method by Cross-Validation}.
\newblock \emph{Machine Learning}, 13:\penalty0 135--143, 1993.

\bibitem[Sch{\"o}lkopf et~al.(2012)Sch{\"o}lkopf, Janzing, Peters, Sgouritsa, Zhang, and Mooij]{scholkopf2012causal}
B.~Sch{\"o}lkopf, D.~Janzing, J.~Peters, E.~Sgouritsa, K.~Zhang, and J.~M. Mooij.
\newblock {On Causal and Anticausal Learning}.
\newblock In \emph{{International Conference on Machine Learning}}, 2012.

\bibitem[Shpitser and Pearl(2006)]{shpitser2006identification}
I.~Shpitser and J.~Pearl.
\newblock {Identification of Joint Interventional Distributions in Recursive Semi-Markovian Causal Models}.
\newblock In \emph{{Proceedings of the National Conference on Artificial Intelligence}}, volume~21, pages 1219--1226, 2006.

\bibitem[Spirtes et~al.(2000)Spirtes, Glymour, and Scheines]{spirtes2000pc}
P.~Spirtes, C.~Glymour, and R.~Scheines.
\newblock \emph{{Causation, Prediction, and Search}}.
\newblock MIT Press, 2000.

\bibitem[Stein(1956)]{stein1956inadmissibility}
C.~Stein.
\newblock {Inadmissibility of the Usual Estimator for the Mean of a Multivariate Normal Distribution}.
\newblock In \emph{{Proceedings of the third Berkeley Symposium on Mathematical Statistics and Probability}}, volume~3, pages 197--207. University of California Press, 1956.

\bibitem[Thrun(1995)]{thrun1995learning}
S.~Thrun.
\newblock {Is Learning The n-th Thing Any Easier Than Learning The First?}
\newblock \emph{Advances in Neural Information Processing Systems}, 8, 1995.

\bibitem[Tian and Pearl(2002)]{tian2002general}
J.~Tian and J.~Pearl.
\newblock {A general identification condition for causal effects}.
\newblock In \emph{{Proceedings of the AAAI Conference on Artificial Intelligence}}, pages 567--573, 2002.

\bibitem[Tibshirani(1996)]{tibshirani1996regression}
R.~Tibshirani.
\newblock {Regression Shrinkage and Selection via the Lasso}.
\newblock \emph{Journal of the Royal Statistical Society: Series B (Methodological)}, 58\penalty0 (1):\penalty0 267--288, 1996.

\bibitem[Wang et~al.(2017)Wang, Solus, Yang, and Uhler]{wang2017permutation}
Y.~Wang, L.~Solus, K.~Yang, and C.~Uhler.
\newblock {Permutation-based Causal Inference Algorithms with Interventions}.
\newblock \emph{Advances in Neural Information Processing Systems}, 30, 2017.

\bibitem[Wasserman(2006)]{wasserman2006allofnonparam}
L.~Wasserman.
\newblock \emph{{All of Nonparametric Statistics}}.
\newblock Springer, 2006.

\bibitem[Yang and Ding(2020)]{pmid33088006}
S.~Yang and P.~Ding.
\newblock {{C}ombining {M}ultiple {O}bservational {D}ata {S}ources to {E}stimate {C}ausal {E}ffects}.
\newblock \emph{Journal of the American Statistical Association}, 115\penalty0 (531):\penalty0 1540--1554, 2020.

\bibitem[Zhang et~al.(2013)Zhang, Sch{\"o}lkopf, Muandet, and Wang]{zhang2013domain}
K.~Zhang, B.~Sch{\"o}lkopf, K.~Muandet, and Z.~Wang.
\newblock {Domain Adaptation under Target and Conditional Shift}.
\newblock In \emph{International Conference on Machine Learning}, pages 819--827, 2013.

\end{thebibliography}
}

\pagebreak
\onecolumn

\begin{center}
    \huge\textbf{Appendix}\\
\end{center}

\setcounter{equation}{0}
\def\theequation{\AlphAlph{\value{equation}}}
\setcounter{section}{0}
\newcommand{\liminfm}{\underset{m \rightarrow \infty}{\text{lim} \, \text{inf}}} %

\newcommand{\swap}[3][-]{#3#1#2} %

\newtheorem*{prospec}{Proposition 4.2}
\newtheorem{lemma2}{Lemma}[section]

\renewcommand\thesection{\Alph{section}}
\section{Proofs}

\subsection{Proposition \ref{prop:MSE_greater_zero}} \label{sec:proof_prop_MSE_greater_zero}

\begin{proof}
    We begin by observing that we can write $\WP$ as

    \begin{equation} \label{equ:rewrite_pooling_matrix}
        \WP = \left(m^{-1} \XI^{\top} \XI \; + \; \frac{n}{m} n^{-1} \XO^{\top} \XO \right)^{-1} \left( m^{-1} \XI^{\top} \XI \right).
    \end{equation}
    We apply the strong law of large numbers to obtain that 
    \begin{equation*}
        m^{-1} \XI^{\top} \XI \asconv \COV(\XI) \quad \text{and} \quad n^{-1} \XO^{\top} \XO \asconv \COV(\XO).
    \end{equation*}
    Due to the fact that $\ulim{m} \frac{n(m)}{m}=c$ for some $c>0$, we conclude

    \begin{equation*}
        \WP \; \asconv \; \Wb_{\infty} \; := \; \left( \text{\textbf{Cov}}(\XI) \; + \; c \cdot \text{\textbf{Cov}}(\XO)\right)^{-1} \text{\textbf{Cov}}(\XI).
    \end{equation*}

    We observe that 

    \begin{equation*}
        \left( \Ib - \Wb_{\infty}\right) = \left( \COV(\XI) + c \cdot \COV(\XO) \right)^{-1} c \cdot \COV(\XO).
    \end{equation*}
    Since both covariance matrices are positive definite, so is $\COV(\XI) + c \cdot \COV(\XO)$. We conclude that the smallest singular value of $\Ib - \Wb_{\infty}$ is strictly greater than 0. This means
    \begin{equation*}
        \big| \big| \mathbb{E}[\alphaWeight{\Wb_{\infty}}] - \bm{\alpha} \big| \big|_2^2 \; = \; || \left( \mathbf{I}_p - \Wb_{\infty} \right) \bm{\Delta} ||_2^2 \; \geq \; c' || \bm{\Delta} ||_2^2,
    \end{equation*}

    for some fixed constant $c' > 0$. We obtain therefore
    
    \begin{equation*}
        0 < \underset{m \rightarrow \infty}{\text{lim}} \big| \big| \mathbb{E}[\alphaWeight{\Wb_{\infty}}] - \bm{\alpha} \big| \big|_2^2 \leq \underset{m \rightarrow \infty}{\text{lim}} \; \text{MSE} \, \left( \alphaWeight{\Wb_{\infty}} \right),
    \end{equation*}
    where we invoked Jensen's inequality. We see that $\Wb_{\infty}$ is constant and bounded. We note that almost sure convergence implies convergence in probability. We can thus apply Lemma \ref{lem:MSEconvergence}, which yields the desired result
    \begin{equation*}
        0 < \underset{m \rightarrow \infty}{\text{lim}} \; \text{MSE} \, \left( \alphaWeight{\Wb_{\infty}} \right) \leq \underset{m \rightarrow \infty}{\text{lim}} \; \text{MSE} \, \left( \alphaWeight{\WP} \right).
    \end{equation*}
\end{proof}

\subsection{Proposition 4.2} \label{sec:prop_pool_vanishing_MSE}

\begin{prospec}
Let $\lim_{m\to\infty} \frac{n(m)}{m} = 0$. Then, it holds that
\begin{equation*}
    \lim_{m\to\infty} \MSE \left(\alphaP \right) = 0.
\end{equation*}
\end{prospec}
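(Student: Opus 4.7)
The plan is to first establish that $\WP \asconv \Ib_p$ under the assumption $n(m)/m \to 0$, and then deduce vanishing MSE either by mirroring the proof of \cref{theo:no_bias_in_the_limit} or by a direct bias--variance argument.

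\textbf{Step 1 (almost sure convergence of $\WP$).} Following the same rewriting as in~\eqref{equ:rewrite_pooling_matrix}, I would write
\begin{equation*}
    \WP = \left( m^{-1} \XI^{\top} \XI \; + \; \frac{n}{m}\, n^{-1} \XO^{\top} \XO \right)^{-1} \left( m^{-1} \XI^{\top} \XI \right).
\end{equation*}
By the strong law of large numbers, $m^{-1}\XI^\top\XI \asconv \COV(\XI)$ and $n^{-1}\XO^\top\XO \asconv \COV(\XO)$ (if $n(m)\to\infty$; if $n(m)$ stays bounded, then $m^{-1}\XO^\top\XO \to \mathbf{0}$ trivially). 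Since $n/m \to 0$, the observational contribution $(n/m)\cdot n^{-1}\XO^\top\XO$ vanishes almost surely, yielding $\WP \asconv \COV(\XI)^{-1}\COV(\XI) = \Ib_p$.

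\textbf{Step 2 (vanishing MSE).} Conditional on $\XP$, the estimator decomposes as $\alphaP - \alphab = (\Ib_p - \WP)\Deltab + \text{(zero-mean noise)}$, so
\begin{equation*}
    \MSE(\alphaP) = \EE\!\left[\|(\Ib_p - \WP)\Deltab\|_2^2\right] + \EE\!\left[\Tr(\COV(\alphaP\mid \XP))\right].
\end{equation*}
For the conditional bias, using $\Ib_p - \WP = (\XO^\top\XO + \XI^\top\XI)^{-1}\XO^\top\XO$,
\begin{equation*}
    \|\Ib_p - \WP\|_2 \leq \|(m^{-1}\XI^\top\XI)^{-1}\|_2 \cdot \tfrac{n}{m}\cdot \|n^{-1}\XO^\top\XO\|_2 \;\xrightarrow{\text{a.s.}}\; 0.
\end{equation*}
For the conditional variance, since the noise variances are bounded by some constant $C$,
\begin{equation*}
    \Tr(\COV(\alphaP\mid\XP)) \leq C\,\Tr((\XP^\top\XP)^{-1}) \leq C\,p\,\|(\XI^\top\XI)^{-1}\|_2 = \tfrac{Cp}{m}\|(m^{-1}\XI^\top\XI)^{-1}\|_2 \;\xrightarrow{\text{a.s.}}\; 0.
\end{equation*}

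\textbf{Step 3 (uniform integrability).} The main obstacle is transferring these almost-sure limits to convergence of the \emph{unconditional} expectations. For small $m$ the factor $\|(m^{-1}\XI^\top\XI)^{-1}\|_2$ can be arbitrarily large, so dominated convergence does not apply directly. I would handle this by restricting to the (eventual) event that $m^{-1}\XI^\top\XI \succeq \tfrac{1}{2}\COV(\XI)$, which under Gaussianity occurs with probability tending to one at an exponential rate (e.g., by standard concentration for Wishart-type matrices); on the complement the probability is small enough to kill any polynomial blow-up of the relevant moments. Alternatively, one can observe that the argument underlying \cref{theo:no_bias_in_the_limit} only uses $c>0$ to guarantee $\WP \asconv \Ib_p$ together with boundedness of the relevant moments, both of which are in fact even easier to establish when $c=0$; invoking that machinery (or the Lemma~\ref{lem:MSEconvergence} used in \cref{sec:proof_prop_MSE_greater_zero}) then directly yields $\lim_{m\to\infty}\MSE(\alphaP)=0$.
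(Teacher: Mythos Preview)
Your Step~1 together with the final suggestion in Step~3 (invoking the machinery behind \cref{theo:no_bias_in_the_limit}) is exactly the paper's proof: rewrite $\WP$ as in~\eqref{equ:rewrite_pooling_matrix}, use the strong law and $n/m\to 0$ to get $\WP\asconv\Ib_p$, and then apply the upper-bound lemma to conclude $\lim_{m\to\infty}\MSE(\alphaWeight{\WP})\le\lim_{m\to\infty}\MSE(\alphaI)=0$. One correction: the relevant lemma is Lemma~\ref{lem:upper_bounded}, not Lemma~\ref{lem:MSEconvergence}; the latter gives the inequality in the \emph{wrong} direction (it lower-bounds $\MSE(\alphaWeight{\West})$ by $\MSE(\alphaWeight{\Wgeneric})$) and would not yield the conclusion here. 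Your direct bias--variance route in Steps~2--3 is a legitimate alternative, but the uniform-integrability obstacle you identify is precisely what Lemma~\ref{lem:upper_bounded} is designed to absorb, so the Wishart-concentration detour is unnecessary.
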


\begin{proof}
    Similar to the proof of Proposition \ref{prop:MSE_greater_zero}, we employ the formulation of \eqref{equ:rewrite_pooling_matrix} and consider the term
    \begin{equation*}
         \frac{n}{m} \cdot n^{-1} \XO^{\top} \XO.
    \end{equation*}
    We see that $\ulim{m}\frac{n(m)}{m} = 0$ and by the strong law of large numbers, $n^{-1} \XO^{\top} \XO \asconv \COV(\XO)$. Hence, we obtain that
    \begin{equation*}
        \frac{n}{m} \cdot n^{-1} \XO^{\top} \XO \asconv \mathbf{0}.
    \end{equation*}
    By the continuous mapping theorem, we conclude that
    \begin{equation*}
         \WP \asconv \mathbf{I}_p,
    \end{equation*}
    and by Lemma \ref{lem:upper_bounded}, this implies that
    \begin{equation*}
         \ulim{m} \; \text{MSE} \left( \alphaWeight{\WP} \right) \; \leq \; \ulim{m} \text{MSE} \left( \alphaI \right) \; = \; 0.
    \end{equation*}
\end{proof}

\subsection{Proposition \ref{prop:weight_matrix_cons}} \label{sec:proof_weight_matrix_cons}

\begin{proof}
    We rewrite $\Wopthat$ as follows:

    \begin{align*}
        \Wopthat = &\left( n^{-1} \left( n^{-1} \XO^{\top}\XO \right)^{-1} \hat{\sigma}^2_{Y|X} + \hat{\bm{\Delta}} \hat{\bm{\Delta}}^{\top} + \epsilon \Ib_p \right) \\ 
        &\left( n^{-1} \left( n^{-1} \XO^{\top}\XO \right)^{-1} \hat{\sigma}^2_{Y|X} + m^{-1} \left( m^{-1} \XI^{\top}\XI \right)^{-1} \hat{\sigma}^2_{Y|\text{do}(X)} + \hat{\bm{\Delta}} \hat{\bm{\Delta}}^{\top} + \epsilon \Ib_p \right)^{-1},
    \end{align*}
    where we insert any almost surely converging estimators for $\bm{\Delta}$, $\sigma_{Y|X}^2$ and $\sigma_{Y|\text{do}(X)}^2$ instead of their ground-truth values. By almost sure convergence of linear estimators individually, we see that this holds specifically for $\hat{\bm{\Delta}} = \alphaO - \alphaI$. Also, we can use the strong law of large numbers to conclude almost sure convergence of $\hat{\sigma}^2_{Y|X}$ and $\hat{\sigma}^2_{Y|\text{do}(X)}$.
    
    We now show $\Wopthat \asconv \Ib_p$: First, we see that
    \begin{equation*}
        (cm)^{-1} \left( n^{-1} \XO^{\top}\XO \right)^{-1} \hat{\sigma}^2_{Y|X} \asconv \mathbf{0} \quad \text{and} \quad m^{-1} \left( m^{-1} \XI^{\top}\XI \right)^{-1} \hat{\sigma}^2_{Y|\text{do}(X)} \asconv \mathbf{0},
    \end{equation*}
    since $m^{-1} \XI^{\top}\XI \; \hat{\sigma}^2_{Y|\text{do}(X)}$ and $n^{-1} \XO^{\top}\XO \; \hat{\sigma}^2_{Y|X}$ converge almost surely to constants and $m^{-1}$ vanishes. Hence,
    \begin{equation*}
        \Wopthat \asconv \left( \bm{\Delta} \bm{\Delta}^{\top} + \epsilon \Ib_p \right) \; \left( \bm{\Delta} \bm{\Delta}^{\top} + \epsilon \Ib_p \right)^{-1} = \Ib_p.
    \end{equation*}
\end{proof}

\subsection{Theorem \ref{theo:no_bias_in_the_limit}} \label{sec:proof_no_bias_in_the_limit}

\begin{proof}
We have that $\Ib_p$ is bounded in norm, almost surely. So we can apply Lemma \ref{lem:upper_bounded} to see that
    \begin{equation*}
        \ulim{m} \; \text{MSE} \, \big(\alphaWeight{\Wopthat}\big) \; \leq \; \ulim{m} \MSE \big( \alphaI \big) \; = \; 0.
    \end{equation*}
\end{proof}

\subsection{Proposition \ref{prop:weak_weight_consistency}} \label{sec:proof_weak_weight_consistency}
\begin{proof}
By Theorem~\ref{theo:no_bias_in_the_limit}, it suffices to show that $\Wweak \asconv \Ib_p$. Since the other quantities $\text{\textbf{Cov}}(\alphaI)$, $\text{\textbf{Cov}}(\alphaO)$ for estimating $\Wopt$ remain unchanged compared to $\Wopthat$, it suffices to show that the modified computation of $\Deltaest$ we call $\hat{\bm{\Delta}}_m^{\ell^2}$ converges almost surely to the true $\bm{\Delta} = \alphaItrue - \alphaOtrue$, where $\alphaItrue$ and $\alphaOtrue$ are short-hand for $\mathbb{E}_{\text{int}}[Y | \Xb = \xb]$ and $\mathbb{E}_{\text{obs}}[Y | \Xb = \xb]$, respectively. We observe that  $\hat{\bm{\Delta}}_m^{\ell^2}$ has a closed-form solution
\begin{align} 
    \hat{\bm{\Delta}}_m^{\ell^2} &= -(\XI^{\top}\XI + \lambda_{\ell^2} \mathbf{I}_p)^{-1} \XI^{\top} (\YI - \XI\alphaO) \\
    &= (\XI^{\top}\XI + \lambda_{\ell^2} \mathbf{I}_p)^{-1} \XI^{\top} \XI \alphaO - (\XI^{\top}\XI + \lambda_{\ell^2} \mathbf{I}_p)^{-1} \XI^{\top}\YI,\label{equ:weak_bias_cons}
\end{align}

since  $\alphaO$ is again a closed-form solution to an ordinary least squares problem. Considering the first term in \eqref{equ:weak_bias_cons}, we conclude almost sure convergence with respect to $\alphaItrue$ (it is simply the ridge regression solution on the interventional data, which is well-known to converge almost surely for fixed $\lambda_{\ell^2}$). The second term satisfies
\begin{equation*}
    (\XI^{\top}\XI + \lambda_{\ell^2} \mathbf{I}_p)^{-1} \XI^{\top} \XI \asconv \mathbf{I}_p \quad \text{and} \quad \alphaO \asconv \alphaOtrue.
\end{equation*}
This leads to the desired conclusion.
\end{proof}

\section{Additional Lemmas}

\begin{lemma2} \label{lem:MSEconvergence}
    Let $\West - \Wgeneric \cons \mathbf{0}$ \footnote{We note that $\Wgeneric$ may be random.} and let there exist $c > 0$, $m' \in \mathbb{N}$, such that $||\Wgeneric||_2 \leq c, \; \text{for all} \; m \geq m'$, almost surely. Then, it holds that
    \begin{equation*}
    \ulim{m} \text{MSE} \, \big( \alphaWeight{\Wgeneric} \big) \; \leq \; \ulim{m} \; \text{MSE} \, \big(\alphaWeight{\West}\big),
    \end{equation*}
    where $\cons$ denotes convergence in probability.
\end{lemma2}

\begin{proof}
    We derive a lower bound on $\text{MSE} \, \big(\alphaWeight{\West}\big)$ by using the formulation
    \begingroup
    \addtolength{\jot}{.5em}
    \begin{equation} \label{eq:upper_lower_bound}
    \begin{aligned}
        \text{MSE} \, \big(\alphaWeight{\West}\big) \; = \; &\mathbb{E}\left[ \mathbbm{1}\left\{||\West - \Wgeneric||_2 \leq \epsilon \right\} \; ||\alphaWeight{\West} - \bm{\alpha}||_2^2 \right] \; + \\ 
        &\mathbb{E} \left[ \mathbbm{1}\left\{||\West - \Wgeneric||_2 > \epsilon \right\} \;  ||\alphaWeight{\West} - \bm{\alpha}||_2^2 \right], \quad \forall \epsilon > 0.
    \end{aligned}
    \end{equation}
    \endgroup
    We bound the second summand of \eqref{eq:upper_lower_bound} from below by zero. For the first summand, we use reverse triangle inequality, which yields
   \begingroup
    \addtolength{\jot}{.5em}
    \begin{equation}\label{eq:first_summand_upper_bound}
        \begin{aligned}
        & &&\mathbb{E}\left[ \mathbbm{1}\left\{||\West - \Wgeneric||_2 \leq \epsilon \right\} ||\alphaWeight{\West} - \bm{\alpha}||_2^2 \right] \; = \; \mathbb{E}\left[ \mathbbm{1}\left\{||\West - \Wgeneric||_2 \leq \epsilon \right\} ||\alphaWeight{\West} - \alphaWeight{\Wgeneric} - (\bm{\alpha} - \alphaWeight{\Wgeneric} )||_2^2 \right] \\
        &\geq \; &&\mathbb{E}\left[\mathbbm{1}\left\{||\West - \Wgeneric||_2 \leq \epsilon \right\}||\alphaWeight{\Wgeneric} - \bm{\alpha}||_2^2\right] -  2\sqrt{\mathbb{E}\left[\mathbbm{1}\left\{||\West - \mathbf{W}||_2 \leq \epsilon \right\} ||\alphaWeight{\West} - \alphaWeight{\Wgeneric}||_2^2 \right] \; \mathbb{E}\left[||\alphaWeight{\Wgeneric} - \bm\alpha||_2^2\right]}\; + \\ 
        & &&\mathbb{E}\left[\mathbbm{1}\left\{|| \West - \Wgeneric ||_2 \leq \epsilon \right\} ||\alphaWeight{\West} - \alphaWeight{\Wgeneric}||_2^2\right] \\
        &\geq \; &&\text{MSE} (\alphaWeight{\Wgeneric}) - \mathbb{E}\left[\mathbbm{1}\left\{||\West - \Wgeneric||_2 > \epsilon \right\} ||\alphaWeight{\Wgeneric} - \bm{\alpha}||_2^2\right] - \\ 
        & &&2\sqrt{\mathbb{E}\left[\mathbbm{1}\left\{||\West - \Wgeneric||_2 \leq \epsilon \right\} ||\alphaWeight{\West} - \alphaWeight{\Wgeneric}||_2^2 \right] \; \mathbb{E}\left[||\alphaWeight{\Wgeneric} - \bm{\alpha}||_2^2\right]}.
        \end{aligned}
    \end{equation}
    \endgroup
    For any constant $\mathbf{W}, \Wb' \in \mathbb{R}^{p \times p}$, we rewrite
    \begingroup
    \addtolength{\jot}{.5em}
    \begin{equation*}
        \begin{aligned}
            \mathbb{E}\left[ ||\alphaWeight{\Wb'} - \alphaWeight{\Wb}||_2^2 \right] \; &= \; &&\mathbb{E}\left[||(\Wb' - \mathbf{W})\alphaI \; + \; (\mathbf{W} - \Wb')\alphaO||_2^2 \right] \\
            &\leq &&2 \left( ||\mathbf{W} - \Wb'||_2^2 \text{Tr}\left( \mathbb{E}\left[\alphaI \widehat{\bm{\alpha}}_{\textsc{i}}^{m \, \top}\right] \right)  \; + \; ||\mathbf{W} - \Wb'||_2^2 \text{Tr}\left( \mathbb{E}\left[\alphaO \widehat{\bm{\alpha}}_{\textsc{o}}^{n \, \top}\right] \right) \right) \\
            &= &&2 ||\mathbf{W} - \Wb'||_2^2 \Bigg[ \left( ||\mathbb{E}\left[\alphaI\right]||_2^2 \; + \; \text{Tr} \left( \text{\textbf{Cov}} \left(\alphaI \right) \right) \right) \; + \; \left( ||\mathbb{E}\left[\alphaO\right]||_2^2 \; + \; \text{Tr} \left( \text{\textbf{Cov}} \left(\alphaO \right) \right) \right) \Bigg],
        \end{aligned}
    \end{equation*}
    \endgroup
    where we have used Young's inequality in the first step. We see that both $||\mathbb{E}\left[\alphaI\right]||_2^2$ and $||\mathbb{E}\left[\alphaO\right]||_2^2$ remain bounded $\forall m$, while $\text{Tr} \left( \text{\textbf{Cov}} \left(\alphaO \right) \right)$ and $\text{Tr} \left( \text{\textbf{Cov}} \left(\alphaI \right) \right)$ decrease monotonically in $m$. Hence, we conclude that for any $\epsilon' > 0$, there exists an $\epsilon > 0$ such that
    \begingroup
    \addtolength{\jot}{.5em}
    \begin{equation} \label{equ:continuity}
    \begin{aligned}
        &\mathbb{E}\left[\big|\big|\alphaWeight{\Wb'} - \alphaWeight{\Wb}\big|\big|_2^2\right] \leq \epsilon', \; \forall m \in \mathbb{N} \; \text{and} \; \forall \mathbf{W}, \Wb' \in \mathbb{R}^{p \times p} \; \text{s.t.} \; ||\mathbf{W} - \Wb'||_2 \leq \epsilon.
    \end{aligned}
    \end{equation}
    \endgroup
    Since $|| \Wgeneric ||_2 \leq c$ for all $m \geq m'$, we have that $|| \alphaWeight{\Wgeneric} - \bm{\alpha} ||_2^2$ is also bounded by some constant $c' > 0$, for all $m \geq m'$, almost surely. We now fix an $\epsilon' > 0$ and choose a corresponding $\epsilon$ such that \eqref{equ:continuity} holds. We then conclude from \eqref{eq:first_summand_upper_bound} that

    \begingroup
    \addtolength{\jot}{.5em}
    \begin{equation*}
    \begin{aligned}
        \MSE \left( \alphaWeight{\West} \right) \geq \quad & \mathbb{E} \left[ \mathbbm{1}\left\{||\West - \Wgeneric||_2 \leq \epsilon \right\} ||\alphaWeight{\West} - \bm{\alpha}||_2^2 \right] \\
        \geq \quad & \MSE \left( \alphaWeight{\Wgeneric} \right) - 2 \sqrt{\epsilon' \; \mathbb{E}\left[ ||\alphaWeight{\Wgeneric} - \bm{\alpha}||_2^2 \right]} - P \left( || \West - \Wgeneric ||_2 > \epsilon \right) c' \\
        \geq \quad & \MSE \left( \alphaWeight{\Wgeneric} \right) - 2\sqrt{\epsilon' c'} - P \left( || \West - \Wgeneric ||_2 > \epsilon \right) c',
    \end{aligned}
    \end{equation*}
    \endgroup

    for all $m \geq m'$. Thus, we conclude 

    \begin{equation*}
        \ulim{m} \MSE \left( \alphaWeight{\West} \right) \; \geq \; \ulim{m} \; \MSE \left( \alphaWeight{\Wgeneric} \right) - 2\sqrt{\epsilon' c'}.
    \end{equation*}
    
    We can repeat this procedure for any $\epsilon' > 0$ and therefore conclude
    
    \begin{equation*}
        \ulim{m} \MSE \left( \alphaWeight{\West} \right) \; \geq \; \ulim{m} \; \MSE \left( \alphaWeight{\Wgeneric} \right),
    \end{equation*}
    which is the desired result.
\end{proof}

\begin{lemma2} \label{lem:upper_bounded}
Let $\West - \Wgeneric \asconv \mathbf{0}$ and let there exist some $c > 0$, $m' \in \mathbb{N}$, such that $||\Wgeneric||_2 \leq c, \forall m \geq m'$, almost surely. Then, it holds that
    \begin{equation*}
    \ulim{m} \; \MSE \, \big(\alphaWeight{\West}\big) \; \leq \; \ulim{m} \MSE \big(\alphaWeight{\Wgeneric} \big).
    \end{equation*}
\end{lemma2}
\begin{proof}
We again employ the formulation from \eqref{eq:upper_lower_bound}, but this time to construct an upper bound. For the first term of \eqref{eq:upper_lower_bound}, we see that
    \begingroup
    \addtolength{\jot}{.5em}
    \begin{equation} \label{eq:upper_term_2}
        \begin{aligned}
        & &&\mathbb{E}\left[ \mathbbm{1}\left\{||\West - \Wgeneric||_2 \leq \epsilon \right\} \; ||\alphaWeight{\West} - \bm{\alpha}||_2^2 \right] \; = \; \mathbb{E}\left[ \mathbbm{1}\left\{||\West - \Wgeneric||_2 \leq \epsilon \right\} \; ||\alphaWeight{\West} - \alphaWeight{\Wgeneric} + \alphaWeight{\Wgeneric} - \bm{\alpha}||_2^2 \right] \\
        &\leq \; &&\text{MSE} \, \big( \alphaWeight{\Wgeneric} \big) \; + \; 2\sqrt{\mathbb{E}\left[ \mathbbm{1}\left\{||\West - \Wgeneric||_2 \leq \epsilon \right\} ||\alphaWeight{\West} - \alphaWeight{\Wgeneric}||_2^2 \right] \; \mathbb{E}[||\alphaWeight{\Wgeneric} - \bm{\alpha} ||_2^2]}\; + \\ 
        & &&\mathbb{E}\left[ \mathbbm{1}\left\{||\West - \Wgeneric||_2 \leq \epsilon \right\} ||\alphaWeight{\West} - \alphaWeight{\Wgeneric}||_2^2 \right],
        \end{aligned}
    \end{equation}
    \endgroup
    by triangle inequality and the Cauchy-Schwarz inequality. Since for $m \geq m'$ it holds that $||\Wgeneric||_2 \leq c$, almost surely, there exists a constant $c' > 0$ such that $\mathbb{E}\left[||\alphaWeight{\Wgeneric} - \bm{\alpha}||_2^2\right] \; \leq \; c'$, for all $m \geq m'$. This is true because the two estimators $\alphaI$ and $\alphaO$ have both bounded mean squared error for any sample size $m$.

    Analogously to the proof for Lemma \ref{lem:MSEconvergence}, we now fix an $\epsilon' > 0$ and choose a corresponding $\epsilon$ such that \eqref{equ:continuity} holds. For $m \geq m'$, we then conclude from \eqref{eq:upper_term_2} that 
    \begingroup
    \addtolength{\jot}{.5em}
    \begin{equation}
    \begin{aligned} \label{eq:first_term}
        &\mathbb{E} \left[ \mathbbm{1}\left\{||\West - \Wgeneric||_2 \leq \epsilon \right\} ||\alphaWeight{\West} - \bm{\alpha}||_2^2 \right] \\
        \leq \quad & \MSE \left( \alphaWeight{\Wgeneric} \right) + 2 \sqrt{\epsilon' \; \mathbb{E}\left[ ||\alphaWeight{\Wgeneric} - \bm{\alpha}||_2^2 \right]} + \epsilon' \\
        \leq \quad & \MSE \left( \alphaWeight{\Wgeneric} \right) + 2\sqrt{\epsilon' c'} + \epsilon'.
    \end{aligned}
    \end{equation}
    \endgroup
    This bounds the first term of \eqref{eq:upper_lower_bound}. For the second term of \eqref{eq:upper_lower_bound}, we use almost sure convergence of $\West - \Wgeneric$. Since $\Wgeneric$ is bounded in the limit, almost surely, so is $\West$. Formally, $|| \West ||_2 \leq c^{''}, \forall m \geq m'$ for some $m' \in \mathbb{N}$, almost surely.
    
    We use this to bound $||\alphaWeight{\West} - \bm{\alpha}||_2^2 < c'''$ for all $m \geq m'$, almost surely, for some $c''' > 0$. Now, we apply iterated expectations to the second term of \eqref{eq:upper_lower_bound} to see that for all $m \geq m'$
    \begingroup
    \addtolength{\jot}{.5em}
    \begin{equation} \label{eq:second_term}
    \begin{aligned}
        \mathbb{E} \left[ \mathbbm{1}\left\{||\West - \Wgeneric||_2 > \epsilon \right\} ||\alphaWeight{\West} - \bm{\alpha}||_2^2 \right] \; 
        = \quad &\mathbb{E}_{\West} \left[ \mathbbm{1}\left\{||\West - \Wgeneric||_2 > \epsilon \right\} \; \mathbb{E}_{\alphaWeight{\West}|\West}\left[||\alphaWeight{\West} - \bm{\alpha}||_2^2 \right] \right] \\ 
        \leq \quad &\text{P} \left( ||\West - \Wgeneric||_2 > \epsilon \right) c''',
    \end{aligned}
    \end{equation}
    \endgroup
    almost surely. Now, we can combine the inequalities \eqref{eq:first_term} and \eqref{eq:second_term} to obtain
    \begin{equation*}
    \text{MSE} \, \left( \alphaWeight{\West} \right) \; \leq \; \text{MSE} \, \left( \alphaWeight{\Wgeneric} \right) \; + \; 2 \sqrt{\epsilon' c'} \; + \; \epsilon' \; + \; \text{P} \left( || \West - \Wgeneric ||_2 > \epsilon \right) c''',
    \end{equation*}
    for all $m \geq m''$. Almost sure convergence implies consistency of $\West - \Wgeneric$ with respect to $\mathbf{0}$, so we see that $\text{P}\left( || \West - \Wgeneric ||_2 > \epsilon \right)$ vanishes in the limit $m \rightarrow \infty$, for all $\epsilon > 0$. We can repeat this procedure for any $\epsilon' > 0$. This implies the desired result.
\end{proof}

\section{Detailed Derivation of Optimal Weighting Schemes} \label{sec:detailed_derivation}

In general, we observe that

\begingroup
    \addtolength{\jot}{.5em}
    \begin{equation*}
        \begin{aligned}
    \Bias(\alphaWeight{\Wb}) &= \Wb \alphab + (\Ib - \Wb)(\alphab + \bm{\Delta}) - \alphab = (\Ib - \Wb) \bm{\Delta}, \\
    \COV(\alphaWeight{\Wb}) &= \Wb \COV(\alphaI) \Wb^{\top} + (\Ib - \Wb) \COV(\alphaO) (\Ib - \Wb)^{\top}.
    \end{aligned}
\end{equation*}
\endgroup

\subsection{Optimal Scalar Weight}

Here, we have
\begin{align*}
&\frac{\partial}{\partial w} \MSE\left(\alphaWeight{w\Ib_p}\right) \\
= \quad &\frac{\partial}{\partial w} \Big| \Big| \Bias\left(\alphaWeight{w\Ib_p}\right) \Big| \Big|_2^2 \; + \; \frac{\partial}{\partial w}\text{Tr} \left( \COV \left(\alphaWeight{w\Ib_p}\right) \right) \\
= \quad & -2(1 - w) ||\bm{\Delta}||_2^2 + 2w \text{Tr}\left( \COV(\alphaI) \right) - 2(1 - w) \text{Tr}\left( \COV(\alphaO) \right) \overset{!}{=} 0.
\end{align*}

By rearranging, we get

\begin{equation*}
\wopt = \frac{\text{Tr}(\mathbf{Cov}(\alphaO)) + \norm{\bm{\Delta}}_2^2}{\text{Tr}(\mathbf{Cov}(\alphaI)) + \text{Tr}(\mathbf{Cov}(\alphaO)) + \norm{\bm{\Delta}}_2^2}.
\end{equation*}

\subsection{Optimal Diagonal Weight Matrix}

Here, we see that the objective decouples into a sum over the individual dimensions

\begin{equation*}
    \MSE\left(\alphaWeight{w\Ib_p}\right) = \sum_{k = 1}^p \left( 1 - w^{(k)} \right)^2 \bm{\Delta}^{(k) \, 2} \; + \; w^{(k) \, 2} \mathbf{Cov}^{(k, k)}(\alphaI) +  \left( 1 - w^{(k)} \right)^2 \mathbf{Cov}^{(k, k)}(\alphaO).
\end{equation*}

Thus, we optimize for each dimension $k$ separately and obtain

\begin{equation*}
    w_*^{m (k)} = \frac{\mathrm{Cov}^{(k, k)}(\alphaO) + \Delta^{(k) \, 2}}{\mathrm{Cov}^{(k, k)}(\alphaI) + \mathrm{Cov}^{(k, k)}(\alphaO) + \Delta^{(k) \, 2}}.
\end{equation*}

\subsection{Optimal Weight Matrix}

Using $\frac{\partial}{\partial \Wb} \text{Tr}(\Wb \Ab \Wb^{\top}) = 2 \Wb \Ab$, since $\Ab$ is symmetric, we observe that
\begin{align*}
    &\frac{\partial}{\partial \Wb} \MSE\left(\alphaWeight{\Wb}\right) \\
    = \quad &2\Wb \left( \COV(\alphaI) + \COV(\alphaO) + \bm{\Delta} \bm{\Delta}^{\top} \right) - 2 \left( \bm{\Delta} \bm{\Delta}^{\top} + \COV(\alphaO) \right) \\
    \overset{!}{=} \quad &\mathbf{0}.
\end{align*}
We see that this minimum is attained for
\begin{equation*}
\left( \COV(\alphaO) + \bm{\Delta} \bm{\Delta}^{\top} \right) \left( \COV(\alphaI) + \COV(\alphaO) + \bm{\Delta} \bm{\Delta}^{\top} \right)^{-1}.
\end{equation*}
\section{Non Zero-Mean Exogenous Variables} \label{sec:mean_non_zero}

All results established here can readily be extended to settings, where any of the exogenous variables have non-zero mean, i.e., $\bm{\mu}_{\Nb_\Xb}$, $\bm{\mu}_{\tilde{\Nb}_\Xb} \coloneqq \mathbb{E}[\tilde{\Nb}_\Xb]$, $\bm{\mu}_{\Nb_\Zb}$, $\mu_{N_Y}$ (see \eqref{equ:multivarSCMconfounder}--\eqref{eq:outcome}) may be non-zero. In order to extend the practical estimators introduced here, one needs to consider the following two pre-processing steps:

First, we center both treatment distributions separately, without scaling:

\begin{align}
    &\xb'_i \; \leftarrow \; \xb_i - n^{-1}\sum_{j \in 1, ..., n} \xb_j, \quad &&\forall i \in 1, ..., n ,\label{eq:mean_center} \\
    &\xb'_i \; \leftarrow \; \xb_i - m^{-1}\sum_{j \in n+1, ..., n+m} \xb_j, \quad &&\forall i \in n+1, ..., n+m.
\end{align}
In this manner, both treatment variables become zero-mean. 

Furthermore, we add a dummy dimension with value one to all treatment vectors:

\begin{equation*}
    \xb''_i \; \leftarrow \; (\xb'_i, \; 1), \quad \forall i \in 1, ..., n+m. 
\end{equation*}

This naturally adds one more dimension also to $\bm{\alpha}$, which corresponds to the intercept term. We then use the constructed $\xb''_i$ to compute the weight matrices proposed in this work.

Finally, we see that the intercept term must be identical for both distributions, interventional and observational:

\begin{equation*} 
    \mathbb{E}[Y \; | \; \Xb' = \xb'] \; = \; \gammab^{\top} \EE[\Zb \; | \; \Xb'= \xb'] + \alphab^\top \xb' + \mu_{N_Y}.
\end{equation*}

We then have in the observational setting (data points $1, ..., n$) that

\begin{align*}
    \bm{\gamma}^{\top} \EE[\Zb \; | \; \Xb'= \xb'] &= \bm{\gamma}^\top \bm{\mu}_{\Nb_\Zb} + \bm{\gamma}^\top \Sigmab_{\Nb_\mathbf{Z}} \mathbf{B}^{\top} (\Sigmab_{\mathbf{N}_{\Xb}} + \mathbf{B} \Sigmab_{\Nb_\mathbf{Z}} \mathbf{B}^{\top})^{-1} (\xb' - \mathbb{E}[\Xb']) \\
     &= \bm{\gamma}^{\top} \bm{\mu}_{\Nb_\Zb} + \bm{\Delta}^{\top} \xb',
\end{align*}

where $\mathbb{E}[\Xb'] = \mathbf{0}$ due to ~\eqref{eq:mean_center}.

For the interventional data, we have independence between $\Xb'$ and $\Zb$ by definition and so we trivially get

\begin{equation*}
    \bm{\gamma}^{\top} \EE[\Zb \; | \; \Xb'= \xb'] = \bm{\gamma}^{\top} \bm{\mu}_{\Nb_\Zb}
\end{equation*}

here. Thus, the intercept is $\bm{\gamma}^{\top} \bm{\mu}_{\Nb_\Zb} + \mu_{N_Y}$ for both distributions and we fix $\hat{\Delta}^{(p+1)} = 0$.

\section{Sample Imbalance} \label{sec:sample_imbalance}

We see that the ground truth covariance matrices of $\alphaI$ and $\alphaO$ adapt to changes in the sample sizes, keeping the distributions of all variables fixed. For instance, we see that

\begin{equation*}
    \COV(\alphaI) = (\XI^{\top} \XI)^{-1} \sigma_{Y|\text{do}(X)}^2 = m^{-1} (m^{-1} \XI^{\top} \XI)^{-1} \sigma_{Y|\text{do}(X)}^2.
\end{equation*}

The term $(m^{-1} \XI^{\top} \XI)^{-1} \sigma_{Y|\text{do}(X)}^2$ is bounded in probability, for large enough $m$. Accordingly, this implies that $\COV(\alphaI) \cons \mathbf{0}$. Thus, when keeping $n$ fixed, we obtain $\Wopt \cons \Ib_p$, for $m \rightarrow \infty$. 

On the other hand, if we keep $m$ fixed and consider the limit $n \rightarrow \infty$ instead, we observe that

\begin{equation*}
    \Wopt \cons \Deltab \Deltab^{\top} (\COV(\alphaI) + \Deltab \Deltab^{\top})^{-1}.
\end{equation*}

We note that we do not have $\Wopt \cons \mathbf{0}$ here in general, because the bias in $\alphaO$ remains, independent of the sample size $n$.

\end{document}